\renewcommand{\citepunct}{,\penalty\@m\hskip.13emplus.1emminus.1em}
\renewcommand{\citedash}{\hbox{--}\penalty\@m}
\newtheorem{lemma}{Lemma}
\newtheorem{proposition}{Proposition}
\newtheorem{remark}{Remark}
\begin{document}

\title{Prediction and Communication Co-design for Ultra-Reliable and Low-Latency Communications}

\author{
\IEEEauthorblockN{
Zhanwei Hou,
Changyang She,
Yonghui Li,
Zhuo Li,
and Branka Vucetic
}\\
\thanks{Part of the work has been presented in IEEE international communications conference (ICC) 2019 \cite{hou2018ultra}.}
\thanks{Z. Hou, C. She, Y. Li and B. Vucetic are with the School of Electrical and Information Engineering, University of Sydney, Sydney, NSW 2006, Australia (email: \{zhanwei.hou, changyang.she, yonghui.li, branka.vucetic\}@sydney.edu.au). Z. Li is with  Beijing University of Technology, Beijing, China (email: zhuoli@bjut.edu.cn).}
}
\maketitle

\begin{abstract}
Ultra-reliable and low-latency communications (URLLC) are considered as one of three new application scenarios in the fifth generation cellular networks. In this work, we aim to reduce the user experienced delay through prediction and communication co-design, where each mobile device predicts its future states and sends them to a data center in advance. Since predictions are not error-free, we consider prediction errors and packet losses in communications when evaluating the reliability of the system. Then, we formulate an optimization problem that maximizes the number of URLLC services supported by the system by optimizing time and frequency resources and the prediction horizon. Simulation results verify the effectiveness of the proposed method, and show that the tradeoff between user experienced delay and reliability can be improved significantly via prediction and communication co-design. Furthermore, we carried out an experiment on the remote control in a virtual factory, and validated our concept on prediction and communication co-design with the practical mobility data generated by a real tactile device.
\end{abstract}
\begin{IEEEkeywords}
Ultra-reliable and low-latency communications, prediction and communication co-design, delay-reliability tradeoff
\end{IEEEkeywords}


\section{Introductions}
\subsection{Backgrounds and Motivations}
Ultra-reliable and low-latency communications (URLLC) are one of the new application scenarios in 5G communications \cite{3GPP2017Scenarios}. By achieving ultra-high reliability (e.g., $10^{-5}$ to $10^{-8}$ packet loss probability) and ultra-low end-to-end (E2E) delay (e.g, $1$~ms), URLLC {lays} the foundation for several mission-critical applications, such as industrial automation, Tactile Internet, remote driving, virtual reality (VR), and tele-surgery \cite{kang2018augmenting,zhao2018toward,Gerhard2014The}. How to achieve two conflicting requirements on delay and reliability remains an open problem.
%
%

{To improve reliability, several technologies have been proposed in the existing literature and specifications, such as K-repetition \cite{3GPP2017Agree}, frequency hopping \cite{she2017radio}, large-scale antenna systems \cite{vu2017ultra}, and multi-connectivity \cite{nielsen2017ultra}}. With these technologies, different kinds of diversities are exploited to improve reliability at the cost of more radio resources. On the other hand, to reduce latency in the air interface, the short frame structure was proposed in 5G New Radio (NR)\cite{lin20185g}, {and fast uplink grant schemes were proposed to reduce access delay \cite{schulz2017latency,jacobsen2017system}}. However, there are some other delay components in the networks, such as delays in buffers of devices, computing systems, backhauls, and core networks. As a result, the user experienced delay can hardly meet the requirements of URLLC. Novel concepts and technologies that can reduce the user experienced delay and improve overall reliability (i.e., total packet losses and errors in different parts of the system) are in urgent need.

To tackle these challenges, we aim to meet the requirements of URLLC by jointly optimizing prediction and communication. The basic idea is to predict the future system states {at the transmitter}, such as locations and force feedback, and then send them to the receiver in advance. In this way, the user experienced delay can be reduced significantly. {For example, if the E2E delay is $10$~ms and the prediction horizon is $9$~ms, then the user experienced delay is $1$~ms.} However, predictions are not error-free, and long prediction horizon will lead to a large prediction error probability. Intuitively, there is a trade-off between the user experienced delay and the overall reliability. To satisfy the two conflicting requirements of URLLC, we need to jointly optimize the prediction and communication systems. Specifically, in this paper, we will address the following questions: \emph{1) how to characterize the tradeoff between user-experienced delay and overall reliability with prediction and communication co-design? 2) Is it possible to satisfy the requirements of URLLC by prediction and communication co-design? 3) If yes, how to maximize the number of URLLC services that can be supported by the system?}



The above questions are challenging to answer since multiple components of delay and errors are involved in prediction and communication systems. As such, we need a prediction and communication co-design framework which takes different delay components and errors into account. Moreover, the complicated constraints on the user experienced delay and the overall reliability are non-convex in general, and hence it is very difficult to find the optimal solution.


\subsection{Our Contributions}
The main contributions of this paper are summarized as follows:


\begin{itemize}
\item We establish a framework for prediction and communication co-design, where the time and frequency resource allocation in the communication system and the prediction horizon in the prediction system are jointly optimized to maximize the number of devices that can be supported in the system.

\item We derive the closed-form expressions of the decoding error probability, the queueing delay violation probability, prediction error probability, and analyzed their properties. From these results, the tradeoff between user experienced delay and overall reliability can be obtained.

\item We propose an algorithm to find a near optimal solution of the optimization problem. The performance loss of the near optimal solution is studied and further validated via numerical results. Besides, we analyze the complexity of the algorithm, which linearly increases with the number of devices.
\end{itemize}

Furthermore, to evaluate the performance of the proposed method, we compared it with a benchmark solution without prediction. Simulation results show that the tradeoff can be improved remarkably with prediction and communication co-design. In addition, an experiment is carried out to validate the accuracy of mobility prediction in practical remote-control scenarios.

The rest of this paper is organized as follows: In  Section II, we review the related literature. The system model is presented in Section III. The co-design of prediction and communication is proposed in Section IV. Numerical and experimental results are presented in Section V, and conclusions are drawn in Section VI.

\section{Related Work}
\subsection{Communications in URLLC}
There are some existing solutions to reduce latency in communication systems for URLLC \cite{lin20185g,soret2014fundamental,schulz2017latency,jacobsen2017system,sachs20185g}. With the 5G New Radio (NR) \cite{lin20185g}, the notion of ``mini-slot" is introduced to support transmissions with the delay as low as the duration of a few symbols. The queueing delay is analyzed and optimized in \cite{soret2014fundamental}, where the tradeoff among throughput, delay and reliability was studied. {To reduce the access delay in uplink transmissions, a semi-persistent scheduling (SPS) scheme was developed in \cite{schulz2017latency}.} A grant-free protocol was proposed in \cite{jacobsen2017system} to further avoid the delay caused by scheduling requests and transmission grants. {With the preemptive scheduling scheme in \cite{sachs20185g}, the short packets with high priority can preempt an ongoing long packet transmission without waiting for the next scheduling period. With this scheme, the scheduling delay of short packets is reduced.}

To improve the reliability for the low latency communications, different kinds of diversities were introduced \cite{3GPP2017Agree,she2017radio,vu2017ultra,nielsen2017ultra}. In \cite{3GPP2017Agree}, K-repetition was proposed to avoid retransmission feedback. The basic idea is to send multiple copies of each packet without waiting for the acknowledgment feedback. Considering that the required delay is shorter than channel coherence time, frequency hopping was adopted in \cite{she2017radio} to improve reliability. In \cite{vu2017ultra}, a Lyapunov optimization problem was formulated to improve the reliability with guaranteed latency, where spatial diversity was used to improve reliability. In \cite{nielsen2017ultra}, interface diversity was proposed to achieve URLLC without modifications in the baseband designs by providing multiple communication interfaces. However, by introducing diversities, the reliability is improved at the cost of low resource utilization efficiency.

This tradeoff between delay and reliability has been exhaustively studied in communication systems \cite{yang2014quasi,bennis2018ultra,she2018cross,berardinelli2018reliability}. To reduce the transmission delay, the blocklength of channel codes is short, and the decoding error probability is nonzero for arbitrary signal-to-noise ratio (SNR). The fundamental tradeoff between transmission delay and decoding error probability in the short blocklength regime was derived in \cite{yang2014quasi}. The tradeoff between the queueing delay and the delay bound violation probability was studied in \cite{bennis2018ultra}. To achieve a lower delay bound, the violation probability increases. Moreover, grant-free schemes can help reduce latency, but introduce extra packet losses due to transmission collisions. How to achieve ultra-high reliability with grant-free schemes was studied in \cite{berardinelli2018reliability}and it is shown that the proposed stop-and-wait protocol can achieve $10^{-5}$ outage probability.


\subsection{Predictions in URLLC}
To achieve satisfactory delay and reliability in URLLC, different kinds of predictions have been studied in the existing literature \cite{tong2018minimizing,simsek20165g,hou2018Burstiness,li2018predictive,makki2018fast,strodthoff2018enhanced}.

In \cite{tong2018minimizing}, the predicted control commands were sent to the receiver and waiting in the buffer. When a control command is lost in communications, predicted commands in the receiver's buffer will be executed. The length of predictive control commands was optimized to minimize the resource consumption. The idea of model-mediated tele-operation approach was mentioned in \cite{simsek20165g}. By predicting the movement or the force feedback, the user experienced delay can be reduced. In both \cite{tong2018minimizing} and \cite{simsek20165g}, prediction errors were not considered, and whether we can achieve ultra-high reliability in the systems remains unclear.

Different from command or mobility predictions in control systems, predicting some other features of traffic or performance of communications is also helpful. In \cite{hou2018Burstiness}, based on the predicted traffic state, a bandwidth reservation scheme was proposed to improve the spectral efficiency of URLLC. By exploiting the correlation among different nodes, the behavior of different users can be predicted \cite{li2018predictive}. Then, by reserving resources according to the predicted behavior, the access delay can be reduced. {A fast hybrid automatic repeat request (HARQ) protocol was proposed in \cite{makki2018fast}, prediction is used to omit some HARQ feedback signals and successive message decodings, so that the expected delay can be improved by $27\%$ to $60\%$ compared with standard HARQ.} In \cite{strodthoff2018enhanced}, the outcome of the decoding was predicted before the end of the transmission. With the predicted result, there is no need to wait for the acknowledgment feedback, and thus the E2E delay can be reduced.

\section{System Model}
\begin{figure}[htbp]
	\centering
	\begin{minipage}[t]{0.95\textwidth}
		\includegraphics[width=1\textwidth]{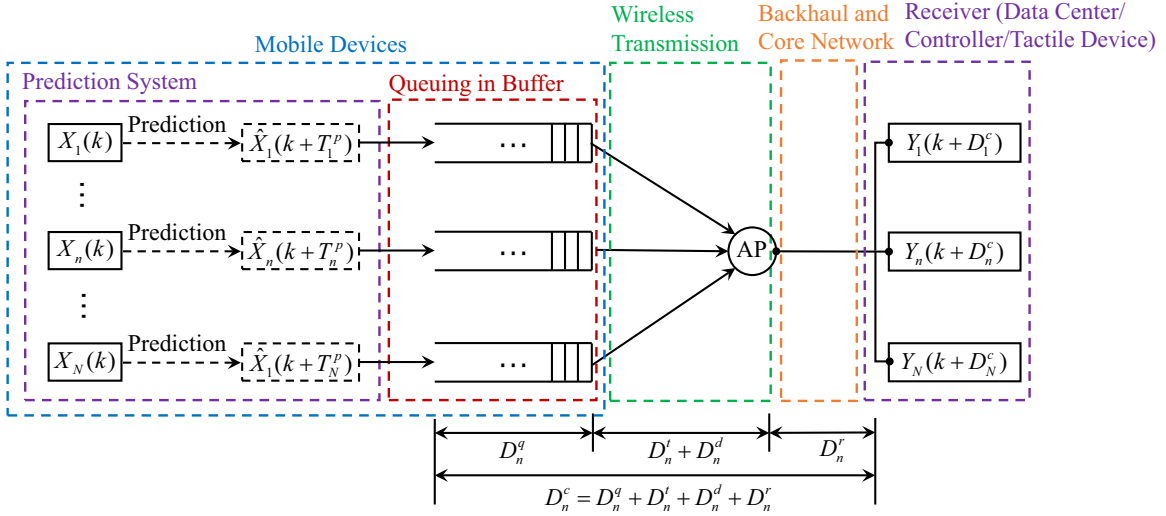}
	\end{minipage}
	\caption{{Illustration of network structure.}}
	\label{fig:NetStc}
	\vspace{-0.3cm}
\end{figure}


As shown in Fig. \ref{fig:NetStc}, we consider a joint prediction and communication system, where $N$ mobile devices send packets to {a receiver, which could be data center, controller, or tactile device}. {The function of the receiver depends on specific applications. In remote driving \cite{kang2018augmenting}, a human driver can remotely control a vehicle based on the feedback from various sensors installed on the vehicle. In factory automation \cite{zhao2018toward}, sensors update information to the controller to perform better closed-loop control, or to a data center for monitoring or fault detection. In Tactile Internet \cite{Gerhard2014The}, force and torques are sent to a tactile device to render the sense of touch, and thus can enable haptic communications.} The packets generated by each device may include different features, such as the location, velocity and acceleration of a device in remote driving or industrial automation, or the force and torques in Tactile Internet.

{The receiver can be deployed at a mobile edge computing (MEC) server or a cloud center. In our framework, we consider a general wireless communication system, where mobile devices send packets to a cloud center via wireless links, backhauls, and core networks. The framework is also suitable for an MEC system, where the delays and packet losses in backhauls and core networks are set to be zero \cite{Changyang2019IoT}.}


%
%
%

\subsection{User Experienced Delay}
Time is discretized into slots. The duration of each slot is denoted as $T_s$. Let $X_n(k)=[x^1_n(k),x^2_n(k),...,x^F_n(k)]^T$ be the state of the $n$th device in the $k$th slot, where $F$ is the number of features. The state of the $n$th device that is received by the receiver in the $k$th slot is denoted as $Y_n(k)$. In traditional communication systems, each device sends its current state $X_n(k)$ to the data center. Let $D^{\rm c}_n$ (slots) be {the $n$th device's} end-to-end (E2E) delay in the communication system. If the packet that conveys $X_n(k)$ is decoded successfully in the $(k+D^{\rm c}_n)$th slot, then $Y_n(k+D^{\rm c}_n)=X_n(k)$, and the user experienced delay is $D^{\rm c}_n$. {For clarification, the key notations are listed in Table \ref{table:key_notations}.}

\begin{table}[!t]
\renewcommand{\arraystretch}{1}
\caption{{Index of Key Notations}}
\label{table:key_notations}
\centering
{\begin{tabular}{|c|l|}
\hline
\textbf{Notation}                                      & \textbf{Description}        \\
\hline
$N$                 & number of mobile devices         \\
\hline
$F$                 & number of features in a state         \\
\hline
$K_n$               & number of copies transmitted in  $K$-Repetition        \\
\hline
$T_s$               & duration of each time slot        \\
\hline
$D^{\rm c}_n$       & end-to-end(E2E) delay in communication system         \\
\hline
$D^{\rm c}_n$       & end-to-end(E2E) delay in communication system         \\
\hline
$T^p_n$             & prediction horizon of the $n$th device     \\
\hline
$D^{\rm e}_n$       & delay experienced by the $n$th device  \\
\hline
$D^{\rm q}_n$       & queueing delay of the $n$th device    \\
\hline
$D^{\rm t}_n$       & transmission delay of the $n$th device   \\
\hline
$D^{\rm d}_n$       & decoding delay of the $n$th device \\
\hline
$D^{\rm r}_n$       &  delay in backhauls and core networks of the $n$th device   \\
\hline
$D^{\tau}_n$        & transmission duration of each copy in $K$-Repetition of the $n$th device\\
\hline
$D_{\rm max}$        & delay requirement\\
\hline
$\varepsilon^{\rm p}_n$      &  prediction error probability of the $n$th device   \\
\hline
$\varepsilon^{\rm q}_n$      &  queueing delay bound violation probability of the $n$th device   \\
\hline
$\varepsilon^{\rm t}_n$      &  packet loss probability of the $n$th device   \\
\hline
$\varepsilon^{\tau}_n$      &  decoding error probability of the $n$th device   \\
\hline
$\bar{\varepsilon}^{\tau}_n$      &  expected decoding error probability of the $n$th device   \\
\hline
$\varepsilon^{\rm o}_n$      &   overall reliability of the $n$th device   \\
\hline
$\varepsilon_{\rm max}$      &   reliability requirement   \\
\hline
$X_n(k)$            & state of the $n$th device in the $k$th slot          \\
\hline
$\hat{X}_n(k)$      & predicted state of the $n$th device in the $k$th slot\\
\hline
$Y_n(k)$            & received state of the $n$th device in the $k$th slot       \\
\hline
$W_n(k)$            & transition noise of the $n$th device  in the $k$th slot       \\
\hline
$E_n(k)$            & difference between real state and predicted state of the $n$th device  in the $k$th slot      \\
\hline
$\Phi_n$            & state transition matrix of the $n$th device       \\
\hline
$E^{\rm B}_n$       & effective bandwidth of the $n$th device \\
\hline
$\lambda_n$         & average packet arrival rate of the $n$th device \\
\hline
$B_n$               & bandwidth of the $n$th device \\
\hline
$\eta$              & fraction of time and frequency resources for
data transmission\\
\hline
$P^{\rm t}_n$       & transmit power of the $n$th device \\
\hline
$N_0$               & noise power spectral density \\
\hline
$\gamma_n$          & SNR of the $n$th device \\
\hline
$a_n$               & large-scale channel gain of the $n$th device \\
\hline
$g_n$               & small-scale channel gain of the $n$th device \\
\hline
$\vartheta$         & SNR loss due to inaccurate channel estimation \\
\hline
$f^{-1}_{\rm Q}(\cdot)$         & inverse function of the Q-function \\
\hline
$N_{\rm r}$         &  number of antennas at the AP \\
\hline
\end{tabular}
}
\end{table}

where $B_n$ is the bandwidth, $P^{\rm t}_n$ represents the transmit power, $N_0$ denotes the noise power spectral density, $\gamma_n=\frac{a_n g_n P^{\rm t}_n}{\vartheta N_0 B_n}$ represents the received SNR, $a_n$ denotes the large-scale channel gain, $g_n$ is the small-scale channel gain, $\vartheta>1$ is the SNR loss due to inaccurate channel estimation, $V_n= 1 - [ 1+ \gamma_n ]^{-2}$ \cite{yang2014quasi}, $f^{-1}_{\rm Q}(\cdot)$ is the inverse function of the Q-function, and $\varepsilon^{\tau}_n$ is the decoding error probability. The blocklength of channel codes is $\eta D^{\tau}_n T_s B_n$. When the blocklength is large, \eqref{eq:comm_delay} approaches the Shannon capacity.

\begin{figure}[htbp]
	\centering
	\begin{minipage}[t]{0.7\textwidth}
		\includegraphics[width=1\textwidth]{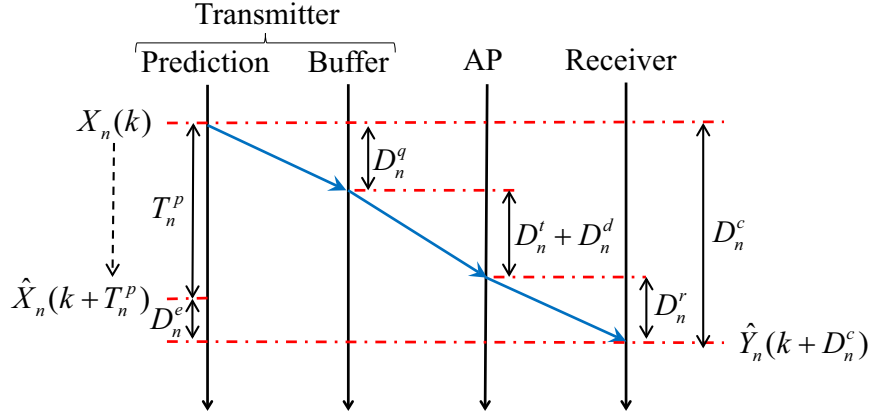}
	\end{minipage}
	\caption{{Illustration of prediction and communication co-design.}}
	\label{fig:PredictStc}
\end{figure}

As shown in Fig. \ref{fig:PredictStc}, to improve the user experienced delay, each device predicts its future state. $T^p_n$ is denoted as the prediction horizon. In the $k$th slot, the device generates a packet based on the predicted state $\hat{X}_n(k+T^p_n)$. After $D^{\rm c}_n$ slots, the packet is received by the data center. Then, we have $Y_n(k+D^{\rm c}_n)=\hat{X}_n(k+T^p_n)$, which is equivalent to $Y_n(k)=\hat{X}_n\left[k-(D^{\rm c}_n-T^p_n)\right], \forall k$. Therefore, the delay experienced by the user is $D^{\rm e}_n=D^{\rm c}_n-T^p_n$.\footnote{If $D^{\rm c}_n$ is smaller than $T^p_n$, $D^{\rm e}_n$ is negative. This means that the receiver can predict the states of devices. In this paper, we only consider the scenario that $D^{\rm e}_n \ge 0$.}


\begin{remark}
\emph{It is worth noting that the states of adjacent slots could be correlated. Thus, source coding schemes that compress the information in multiple slots can achieve higher compression ratio. On the other hand, channel coding schemes that encode the packets to be transmitted in multiple slots into one block, can achieve higher reliability. However, both of them will lead to a longer decoding delay. To achieve ultra-low latency, in this paper we assume that the source coding and channel coding in the $k$th slots only depend on $\hat{X}_n(k+T^p_n)$ and the data to be transmitted in this slot.}
\end{remark}

\subsection{Delay and Reliability Requirements}
The delay and reliability requirements are characterized by a maximum delay bound and a maximum tolerable error probability, $D_{\max}$ and $\varepsilon_{\max}$. It means that ${X}_n(k)$ should be received by the data center before the $(k+D_{\max})$th slot with probability $1-\varepsilon_{\max}$.

To satisfy the delay requirement, the user experienced delay should not exceed a maximal delay bound, i.e.,
\begin{equation}\label{eq:DelayReq}
  D^{\rm e}_n=D^{\rm c}_n-T^p_n \le D_{\rm max}.
\end{equation}
In the considered communication system, the E2E communication delay $D^{\rm c}_n$ includes queueing delay $D^{\rm q}_n$, transmission delay $D^{\rm t}_n$, decoding delay $D^{\rm d}_n$, and delay in backhauls and core networks $D^{\rm r}_n$.

Thus, the constraint in \eqref{eq:DelayReq} can be re-expressed as follows,
\begin{equation}\label{eq:comm_delay}
D^{\rm q}_n+D^{\rm t}_n+{D^{\rm d}_n}+D^{\rm r}_n-T^p_n \le D_{\rm max},
\end{equation}
{where $D^{\rm d}_n=\kappa D^{\rm t}_n$, $\kappa>0$.}

The overall reliability depends on prediction errors and packet losses in communications. In the control system, if the difference between the actual state of the device and the received state does not exceed a required threshold, the user cannot notice the difference. For example, in Tactile Internet, the minimum difference of the force stimulus intensity that our hands can percept is referred to as just noticeable difference (JND) \cite{feyzabadi2013human}. We define the difference between $\hat{X}_n(k)$ and $X_n(k)$ as $E_n(k) = [e^1_n(k),e^2_n(k),...,e^F_n(k)]^T$, where $e^j_n(k) = \hat{x}^j_n(k+D^{\rm e}_n)-x^j_n(k)$. The JND of this system is denoted as $\Delta=[\delta_{1},\delta_{2},...,\delta_{N}]^T$. Then, the prediction error probability is given by
\begin{equation}\label{eq:DefPredictErr2}
  \varepsilon^{\rm p}_n=1-\prod\limits_{j=1}^{N} \Pr\{|e^j_n(k)|\le \delta_{j}\},
\end{equation}

{Even if} $\hat{X}_n(k)$ is accurate enough, it will be useless if it is not received by the data center before the $(k+D_{\max})$th slot. Denote the queueing delay bound violation probability and the packet loss probability of the $n$th device as $\varepsilon^{\rm q}_n$ and $\varepsilon^{\rm t}_n$, respectively. Then, the overall reliability of the device can be expressed as follows,
\begin{equation}\label{eq:OverallReliability}
  \varepsilon^{\rm o}_n=1-(1-\varepsilon^{\rm q}_n)(1-\varepsilon^{\rm t}_n)(1-\varepsilon^{\rm p}_n).
\end{equation}
To achieve ultra-high reliability, all of $\varepsilon^{\rm q}_n$, $\varepsilon^{\rm t}_n$ and $\varepsilon^{\rm p}_n$ should be small (i.e., less than $10^{-5}$). Thus, \eqref{eq:OverallReliability} can be accurately approximated by $\varepsilon^{\rm o}_n \approx \varepsilon^{\rm q}_n+\varepsilon^{\rm t}_n + \varepsilon^{\rm p}_n$, and the reliability requirement can be satisfied if
\begin{equation}\label{eq:ReliableReq}
\varepsilon^{\rm q}_n + \varepsilon^{\rm t}_n + \varepsilon^{\rm p}_n \le \varepsilon_{\rm max}.
\end{equation}

\section{{Tradeoffs in Prediction and Communication Systems}} \label{section3}
In this section, we first consider a general linear prediction framework, and derive the relation between the prediction error probability and the prediction horizon in a closed form. Then, we characterize the tradeoff between communication reliability and E2E delay for short packet transmissions in a closed form. Based on the analysis, we further study how to maximize the number of URLLC services that can be supported by the system.

\subsection{State Transition Function}
We assume that the state of the $n$th device, $X_n(k)$, changes according to the following state transition function \cite{kay1993fundamentals}
\begin{equation}\label{eq:state_transition}
  X_n(k+1)=\Phi_n X_n(k) +W_n(k),
\end{equation}
where $\Phi_n=[\phi^{i,j}_n]_{F \times F}$, $i,j=1,2,\cdots,F$, is the state transition matrix and $W_n(k)=[w^{i}_n(k)]_{F \times 1}$, $i=1,2,\cdots,F$, is the transition noise. We assume that $\Phi_n$ is constant, and thus it can be obtained from measurements or physical laws. The elements of $W_n(k)$ are independent random variables that follow Gaussian distributions with zero mean and variances $\sigma_{1}^2,\sigma_{2}^2,\cdots,\sigma_{F}^2$, respectively.

\begin{remark}
{\emph{This model is widely adopted in kinematics systems or control systems \cite{kay1993fundamentals,klanvcar2007tracking}. Here we consider a general prediction method for a linear system. This is because for non-linear system, the relation between the prediction horizon and the prediction error probability can hardly be derived in a closed-form expression. To implement our framework in non-linear systems, data-driven prediction methods such as neural networks should be applied. These methods do not rely on system models, and will be considered in our future work.}}
\end{remark}

According to \eqref{eq:state_transition}, the state in the $(k+T^p_n)$th slot is given by
\begin{equation}\label{eq:n_step_real_state}
  X_n{(k+T^p_n)}=(\Phi_n)^{T^p_n} X_n(k) + \sum\limits_{i=1}^{T^p_n} (\Phi_n)^{T^p_n-i}W_n{(k+i-1)}.
\end{equation}


\subsection{Prediction Horizon and Prediction Error Probability}

Inspired by Kalman filter, we consider a general linear prediction method \cite{kay1993fundamentals}. Based on the system state in the $k$th slot, we can predict the state in the $(k+1)$th slot according to following expression,
\begin{equation}\label{eq:prediction_model}
  \hat{X}_n{(k+1)}=\Phi_n X_n(k).
\end{equation}
From \eqref{eq:prediction_model}, we can further predict the state in the $(k+T^p_n)$th slot,
\begin{equation}\label{eq:n_step_prediction_model}
  \hat{X}_n{(k+T^p_n)}=(\Phi_n)^{T^p_n} X_n(k).
\end{equation}
After $T^p_n$ steps of prediction, the difference between $X_n{(k+T^p_n)}$ and $\hat{X}_n{(k+T^p_n)}$ can be derived as follows,
\begin{equation}\label{eq:n_step_prediction_error}
\begin{aligned}
E_n(k+T^p_n) &\triangleq X_n{(k+T^p_n)}-\hat{X}_n{(k+T^p_n)} \\
&=W_n{(k+T^p_n-1)} + \sum\limits_{i=1}^{T^p_n-1} (\Phi_n)^{T^p_n-i}W_n{(k+i-1)}.
\end{aligned}
\end{equation}

The $j$th element of $E_n(k+T^p_n)$ is given by
\begin{equation}\label{eq:e_j_k}
  e^j_n(k+T^p_n)=w^j_n{(k+T^p_n-1)}+\sum\limits_{i=1}^{T^p_n-1}\sum\limits_{m=1}^{F}\phi_{n,j,m,T^p_n-i}w^m_n{(k+i-1)},
\end{equation}
where $\phi_{n,j,m,T^p_n-i}$ is the element of $(\Phi_n)^{T^p_n-i}$ at the $j$th row and the $m$th column.

Since the state transition noises follow independent Gaussian distributions, and $e^j_n(k+T^p_n)$ is a linear combination of them, $e^j_n(k+T^p_n)$ follows a Gaussian distribution with zero mean. The variance of $e^j_n(k+T^p_n)$ is denoted as $\rho_{n,j}^2(T^p_n)$, which is given by
\begin{equation}\label{eq:sigma_n_j}
  \rho_{n,j}^2(T^p_n)=\sigma_{j}^2+\sum\limits_{i=1}^{T^p_n-1}\sum\limits_{m=1}^{F}(\phi_{n,j,m,T^p_n-i})^2\sigma_{m}^2.
\end{equation}

Therefore, $\Pr\{|e^j_n(k+T^p_n)|\le \delta_j\}$ can be derived as follows,
\begin{equation}\label{eq:DefPredictErr3}
\begin{aligned}
  \Pr\{|e^j_n(k+T^p_n)|\le \delta_j\} =& 1- \Pr\{|e^j_n(k+T^p_n)| > \delta_j\}\\
                     =& 1-\psi_{T^p_n,j}\left(-\delta_j\right)\\
                     =& 1-\psi\left(\frac{-\delta_j}{\rho_{n,j}(T^p_n)}\right),
\end{aligned}
\end{equation}
where $\psi_{T^p_n,j}(\cdot)$ is the cumulative distribution function (CDF) of $e^j_n(k+T^p_n)$, and $\psi(\cdot)$ is the CDF of standard Gaussian distribution with zero mean and unit variance.

By substituting \eqref{eq:DefPredictErr3} into \eqref{eq:DefPredictErr2}, $\varepsilon^{\rm p}_n$ can be expressed as follows,
\begin{equation}\label{eq:PredictErrCalc}
  \varepsilon^{\rm p}_n=1-\prod\limits_{j=1}^{F} \left[1-\psi\left(\frac{-\delta_j}{\rho_{n,j}(T^p_n)}\right)\right].
\end{equation}

From the expression in \eqref{eq:PredictErrCalc}, we can obtain the following property of $\varepsilon^{\rm p}_n$.

\begin{lemma}\label{prop:mono_epsilon_p}
$\varepsilon^{\rm p}_n$ strictly increases with the prediction horizon $T^p_n$.
\end{lemma}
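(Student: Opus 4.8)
The plan is to show monotonicity by first establishing that the variance $\rho_{n,j}^2(T^p_n)$ is strictly increasing in $T^p_n$ for every feature index $j$, and then pushing this through the composition of monotone maps in \eqref{eq:PredictErrCalc}. First I would look at the closed-form expression \eqref{eq:sigma_n_j}: going from horizon $T^p_n$ to $T^p_n+1$, the sum $\sum_{i=1}^{T^p_n-1}\sum_{m=1}^{F}(\phi_{n,j,m,T^p_n-i})^2\sigma_m^2$ gains one additional outer index value, contributing a nonnegative term, while the remaining terms re-index consistently because $(\Phi_n)^{T^p_n-i}$ depends only on the difference $T^p_n-i$. Thus $\rho_{n,j}^2(T^p_n+1)-\rho_{n,j}^2(T^p_n)$ equals a sum of squared matrix entries times variances $\sigma_m^2>0$, which is strictly positive as long as at least one of the relevant $\phi_{n,j,m,\cdot}$ is nonzero — and this nonnegativity alone already suffices to get $\le$; I will argue the strictness holds generically (or state it under the mild nondegeneracy assumption that $\Phi_n$ has no zero rows, which is implicit in the kinematic model).

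Second, with $\rho_{n,j}(T^p_n)$ strictly increasing, I would observe that $\delta_j>0$, so $-\delta_j/\rho_{n,j}(T^p_n)$ is strictly increasing in $T^p_n$ (a negative constant divided by an increasing positive quantity moves toward zero from below). Since $\psi(\cdot)$, the standard Gaussian CDF, is strictly increasing, $\psi\!\left(-\delta_j/\rho_{n,j}(T^p_n)\right)$ is strictly increasing in $T^p_n$, hence $1-\psi\!\left(-\delta_j/\rho_{n,j}(T^p_n)\right)=\Pr\{|e^j_n(k+T^p_n)|\le\delta_j\}$ is strictly decreasing in $T^p_n$. Each such factor lies in $(0,1)$, so the product $\prod_{j=1}^{F}\left[1-\psi\!\left(-\delta_j/\rho_{n,j}(T^p_n)\right)\right]$ is a product of strictly positive, strictly decreasing terms and is therefore strictly decreasing; consequently $\varepsilon^{\rm p}_n=1-\prod_{j=1}^{F}\left[1-\psi\!\left(-\delta_j/\rho_{n,j}(T^p_n)\right)\right]$ is strictly increasing in $T^p_n$, which is the claim.

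The main obstacle is the strictness rather than the weak monotonicity: the increment $\rho_{n,j}^2(T^p_n+1)-\rho_{n,j}^2(T^p_n)$ could in principle vanish if all the matrix entries entering the new term happen to be zero (e.g. a pathological $\Phi_n$), so I would either invoke the structural assumption on the kinematic transition matrix (positions depending on velocities, etc., guaranteeing the relevant powers of $\Phi_n$ have nonzero entries) or simply remark that in the degenerate case the prediction is exact and the statement holds trivially with the convention that such features are dropped from the product. A secondary technical point worth a line is the re-indexing in \eqref{eq:sigma_n_j}: one must verify that the terms common to horizons $T^p_n$ and $T^p_n+1$ genuinely coincide, which follows because $(\Phi_n)^{T^p_n+1-i'}$ for $i'=i+1$ equals $(\Phi_n)^{T^p_n-i}$, so the two sums differ only by the single extra slice — this is where a careful shift of summation variable does all the work, and everything else is routine.
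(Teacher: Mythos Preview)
Your proposal is correct and follows essentially the same approach as the paper's proof: both establish that the variance $\rho_{n,j}^2(T^p_n)$ increases with the horizon by computing the one-step increment from \eqref{eq:sigma_n_j}, and then propagate this through the monotone composition in \eqref{eq:PredictErrCalc}. You are in fact more careful than the paper, which simply asserts the increment is strictly positive without addressing the degenerate case you flag, and does not spell out the chain of monotone maps as explicitly as you do.
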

\begin{proof}
Please see Appendix \ref{appendix:proposition:monop}.
\end{proof}

{Lemma \ref{prop:mono_epsilon_p} indicates that a longer prediction horizon leads to a larger prediction error probability. This is in accordance with the intuition. For example, predicting the mobility of a device in the next $100$~ms will be much harder than predicting the mobility in the next $10$~ms.}

%

\subsection{Queueing Delay Bound Violation Probability}
To derive the queueing delay bound violation probability, $\varepsilon^{\rm q}_n$, we can use the concept of effective bandwidth \cite{she2018cross}. Effective bandwidth is defined as the minimal constant service rate of the queueing system that is required to ensure the maximum queueing delay bound and the delay bound violation probability \cite{chang1995effective}.\footnote{{To analyze the upper bound of the delay bound violation probability, a widely used tool is network calculus \cite{al2016network}. However, with network calculus, one can hardly obtain a closed-form expression of the delay bound violation probability. Since we are interested in the asymptotic scenarios that $\varepsilon^{\rm q}_n$ is very small, effective bandwidth can be used \cite{chang1995effective}.}}

The number of packets generated in each slot depends on the mobility of the device and the random events detected by the device. According to the observation in \cite{Condoluci2017Soft}, {packet} arrival processes in Tactile Internet are very bursty. To capture the burstiness of the packet arrival process, a switched Poisson process (SPP) can be applied \cite{hou2018Burstiness} \footnote{{In standardizations of 3GPP, In standardizations of 3GPP, queueing models are not specified since they depend on specific applications.}}. A SPP includes two traffic states. In each state, the packet arrival process follows a Poisson process. The average packet arrival rates are different in the two states, and the SPP switches between the two states according to a Markov chain. With the traffic state classification methods in \cite{hou2018Burstiness}, the AP knows the average packet arrival rate in the current state, $\lambda_n$ (packets/slot). According to \cite{she2018cross}, the effective bandwidth of the Poisson process is given by
\begin{equation}\label{eq:EB}
E^{\rm B}_n = \frac{ \ln{(1/\varepsilon^{\rm q}_n)}}{D^{\rm q}_n \ln \left[ \frac{ \ln{(1/\varepsilon^{\rm q}_n)}}{\lambda_n D^{\rm q}_n} + 1 \right]} \; \text{packets/slot},
\end{equation}
which is the minimal constant service rate required to ensure $D^{\rm q}_n$ and $\varepsilon_n^{\rm q}$. Since the transmission delay of each packet is fixed as $D^{\rm t}_n$, to guarantee the queueing delay violation probability, the following constraint should be satisfied,
\begin{equation}\label{eq:crosslayer}
  \frac{1}{D^{\rm t}_n} = E^{\rm B}_n.
\end{equation}

Then, the queueing delay violation probability can be derived as
\begin{equation}\label{eq:varepsilon_q}
  \varepsilon^{\rm q}_n = e^{ D^{\rm q}_n \phi(\lambda_n,E^{\rm B}_n)  },
\end{equation}
where
\begin{equation}\label{eq:varepsilon_q2}
  \phi(\lambda_n,E^{\rm B}_n)= E^{\rm B}_n \mathbb{W}_{-1}\left(-\frac{\lambda_n}{E^{\rm B}_n} e^{-\frac{\lambda_n}{E^{\rm B}_n}}\right) + \lambda_n,
\end{equation}
where $\mathbb{W}_{-1}(\cdot)$ is the ``$-1$" branch of the Lambert W-function, which is defined as the inverse function of $f(x)=x e^{x}$. The derivations of \eqref{eq:varepsilon_q} and \eqref{eq:varepsilon_q2} are given in Appendix \ref{appendix:lambert}.

With the expressions in \eqref{eq:varepsilon_q} and \eqref{eq:varepsilon_q2}, we can obtain the following property of $\varepsilon^{\rm q}_n$.
\begin{lemma}\label{prop:mono_epsilon_q}
$\varepsilon^{\rm q}_n$ strictly decreases with the queueing delay $D^{\rm q}_n$ when $\lambda_n$ and $E^{\rm B}_n$ are given.
\end{lemma}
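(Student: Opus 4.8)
The plan is to exploit the structure of \eqref{eq:varepsilon_q}. Since $\lambda_n$ and $E^{\rm B}_n$ are held fixed, $\phi(\lambda_n,E^{\rm B}_n)$ is a constant and $\varepsilon^{\rm q}_n=e^{D^{\rm q}_n\phi(\lambda_n,E^{\rm B}_n)}$ is simply an exponential in $D^{\rm q}_n$, with $\partial \varepsilon^{\rm q}_n/\partial D^{\rm q}_n=\phi(\lambda_n,E^{\rm B}_n)\,e^{D^{\rm q}_n\phi(\lambda_n,E^{\rm B}_n)}$. Hence the lemma reduces to the single inequality $\phi(\lambda_n,E^{\rm B}_n)<0$; everything hinges on this.

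To prove $\phi(\lambda_n,E^{\rm B}_n)<0$, I would set $x\triangleq \lambda_n/E^{\rm B}_n>0$ and rewrite \eqref{eq:varepsilon_q2} as $\phi(\lambda_n,E^{\rm B}_n)=E^{\rm B}_n\bigl[\mathbb{W}_{-1}(-xe^{-x})+x\bigr]$, so that $\phi(\lambda_n,E^{\rm B}_n)<0$ is equivalent to $\mathbb{W}_{-1}(-xe^{-x})<-x$. The key observation is that $u=-x$ is itself a root of $ue^{u}=-xe^{-x}$, i.e.\ it is one of the (at most two) real branch values of the Lambert W-function at the argument $-xe^{-x}$. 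It then suffices to argue that $-x$ is the \emph{principal}-branch value $\mathbb{W}_0(-xe^{-x})$ and that the lower branch $\mathbb{W}_{-1}$ lies strictly below it at this argument.

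That in turn needs $-x\in[-1,0)$, i.e.\ $x<1$, which is precisely the queue-stability condition $\lambda_n<E^{\rm B}_n$. I would establish it directly from \eqref{eq:EB}: writing $t\triangleq \ln(1/\varepsilon^{\rm q}_n)/(\lambda_n D^{\rm q}_n)>0$, \eqref{eq:EB} gives $E^{\rm B}_n/\lambda_n=t/\ln(1+t)$, and since $\ln(1+t)<t$ for every $t>0$ we obtain $E^{\rm B}_n>\lambda_n$, hence $x\in(0,1)$. Consequently $-xe^{-x}\in(-1/e,0)$ (the map $u\mapsto -ue^{-u}$ attains its minimum $-1/e$ at $u=1$), so both real branches exist, $\mathbb{W}_0(-xe^{-x})=-x\in(-1,0)$, while by the definition of the lower branch $\mathbb{W}_{-1}(-xe^{-x})\le -1<-x$. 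Therefore $\phi(\lambda_n,E^{\rm B}_n)=E^{\rm B}_n\bigl[\mathbb{W}_{-1}(-xe^{-x})+x\bigr]<0$, which yields $\partial \varepsilon^{\rm q}_n/\partial D^{\rm q}_n<0$ and proves the claim.

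The main obstacle --- and really the only delicate point --- is the branch bookkeeping for the Lambert W-function: checking that the argument $-xe^{-x}$ lands in the interval $[-1/e,0)$ on which $\mathbb{W}_{-1}$ is real-valued, identifying $-x$ as the principal-branch value, and invoking the strict ordering $\mathbb{W}_{-1}(\cdot)<\mathbb{W}_{0}(\cdot)$ on $(-1/e,0)$. All of this rests on the stability bound $\lambda_n<E^{\rm B}_n$, so I would make sure that step is airtight; it follows cleanly from $\ln(1+t)<t$, and the remaining manipulations are routine.
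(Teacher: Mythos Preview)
Your argument is correct and reaches the same endpoint as the paper --- namely, that $\phi(\lambda_n,E^{\rm B}_n)<0$, after which the exponential form \eqref{eq:varepsilon_q} immediately gives strict monotonicity --- but you get there by a different and considerably longer route. The paper's own proof is a two-line shortcut: from $\ln\varepsilon^{\rm q}_n=D^{\rm q}_n\phi(\lambda_n,E^{\rm B}_n)$, the fact that $\varepsilon^{\rm q}_n<1$ (indeed, of order $10^{-5}$ in the URLLC regime) and $D^{\rm q}_n>0$ forces $\phi(\lambda_n,E^{\rm B}_n)<0$ directly, without any Lambert-$W$ bookkeeping. Your derivation instead proves $\phi<0$ from first principles by (i) deducing the queue-stability inequality $\lambda_n<E^{\rm B}_n$ from \eqref{eq:EB} via $\ln(1+t)<t$, and (ii) using the branch ordering $\mathbb{W}_{-1}(-xe^{-x})<-1<-x$ on $(-1/e,0)$. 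Note that your step (i) itself uses $t>0$, i.e.\ $\varepsilon^{\rm q}_n<1$, so ultimately you invoke the same hypothesis the paper does; what your approach buys is an explicit verification that the $\mathbb{W}_{-1}$ branch is well defined and strictly below the principal branch at the relevant argument, which the paper simply takes for granted (having discarded the $\mathbb{W}_0$ branch in Appendix~\ref{appendix:lambert}). In short: the paper's proof is quicker, yours is more self-contained about the Lambert-$W$ structure, and both are valid.
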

\begin{proof}
Please see Appendix \ref{appendix:proposition:monoq}.
\end{proof}

{Lemma \ref{prop:mono_epsilon_q} indicates that with the same packet arrival process and service process, the queueing system with a smaller queueing delay bound requirement has a larger queueing delay violation probability. The intuition is that for a given CDF of the steady state queueing delay, the queueing delay violation probability decreases with the queueing delay bound.}


\subsection{Packet Loss Probability in Transmissions}
With predictions, the communication delay can be longer than the required delay bound $D_{\rm max}$ (e.g., $1$~ms). As such, retransmissions or repetitions becomes possible. To avoid feedback delay caused by retransmissions, we apply $K$-Repetitions to reduce the packet loss probability in the communication system, i.e., the device sends $K$ copies of each coding block no matter whether the first few copies are successfully decoded or not \cite{3GPP2017Agree}. The transmission duration of each copy is denoted as $D^{\tau}_n$. Then, we have $D^{\tau}_n = D^{\rm t}_n/K_n$. Some time and frequency resources are reserved for channel estimation at the AP. The fraction of time and frequency resources for data transmission is denoted as $\eta < 1$. To avoid overhead and extra delay caused by channel estimation at the device, we assume the device does not have channel state information {(CSI)}. {The impacts of CSI and training pilots on the achievable rate have been studied in the short blocklength regime \cite{schiessl2018delay,ostman2018short,mousaei2017optimizing,hassibi2003much}. If more resource blocks are occupied by pilots, the accuracy of the estimated CSI can be improved. However, the remaining resource blocks for data transmission reduces. How to allocate radio resources for pilots and data transmissions is a complicated problem and deserves further study. By assuming CSI is not available at the transmitters, our approach can serve as a benchmark for future research.}

For the transmission of each copy, we assume that the transmission duration is smaller than the channel coherence time and the bandwidth is smaller than the coherence bandwidth. This assumption is reasonable for short packet transmissions in URLLC. Then, the achievable rate in the short blocklength regime {over a quasi-static SIMO channel} can be accurately approximated by {the following normal approximation} \cite{yang2014quasi}\footnote{{The bounds of the decoding error probability can be obtained by using saddlepoint method \cite{lancho2019saddlepoint}, which is very accurate but has no closed-form expression. Since the gap between the normal approximation and practical coding schemes is around $0.1$~dB \cite{shirvanimoghaddam2018short}, it is accurate enough for our framework.}},
{\begin{equation}\label{eq:bound}
b_n \approx \frac{\eta D^{\tau}_n T_s B_n}{\ln{2}}\left[ \ln{\left( 1 + \gamma_n \right)} - \sqrt{\frac{V_n}{\eta D^{\tau}_n T_s B_n}} f^{-1}_{\rm Q}(\varepsilon^{\tau}_n) \right]\, \; (\rm bits/block),
\end{equation}}
where $B_n$ is the bandwidth, {$\gamma_n$ represents the received SNR},  $V_n= 1 - [ 1+ \gamma_n ]^{-2}$ \cite{yang2014quasi}, $f^{-1}_{\rm Q}(\cdot)$ is the inverse function of the Q-function, and $\varepsilon^{\tau}_n$ is the decoding error probability. The blocklength of channel codes is $\eta D^{\tau}_n T_s B_n$. When the blocklength is large, \eqref{eq:bound} approaches the Shannon capacity \footnote{{The results in \cite{schiessl2015delay} indicate that if Shannon capacity is used in the analyses, the delay bound and delay bound violation probability will be underestimated. Thus, the requirements of URLLC cannot be satisfied.}}.

According to \eqref{eq:bound}, the expected decoding error probability of each transmission over the SIMO channel is given by \cite{yang2014quasi}
\begin{equation}\label{eq:avg_epsilon_t}
\bar{\varepsilon}^{\tau}_n = \int_0^\infty f_{\rm Q}  \left\{ \sqrt{\frac{\eta D^{\tau}_n T_s B_n}{V_n}} \left[ \ln{\left( 1 + \frac{a_n g_n P^{\rm t}_n}{\vartheta N_0 B_n} \right)}- \frac{b_n \ln{2}}{\eta D^{\tau}_n T_s B_n} \right] \right\} \cdot f_g(x) dx,
\end{equation}
{where $\gamma_n = \frac{a_n g_n P^{\rm t}_n}{\vartheta N_0 B_n}$ is applied, $a_n$ denotes the large-scale channel gain, $g_n$ is the small-scale channel gain, $P^{\rm t}_n$ represents the transmit power, $\vartheta>1$ is the SNR loss due to inaccurate channel estimation, $N_0$ denotes the noise power spectral density, and $f_g(x)$ is the distribution of the instantaneous channel gain.} For Rayleigh fading channel, we have $f_g(x)=\frac{1}{(N_{\rm r}-1)!}x^{N_{\rm r}-1}e^{-x}$, where $N_{\rm r}$ is the number of antennas at the AP. From the approximation in \cite{She2018availability}\footnote{{As validated in \cite{She2018availability}, the approximation in \eqref{eq:avg_epsilon_t_approx} is accurate, especially when the number of antennas is large or the packet loss probability is small.}}, $\bar{\varepsilon}^{\tau}_n$ can be accurately approximated by
\begin{equation}\label{eq:avg_epsilon_t_approx}
  \bar{\varepsilon}^{\tau}_n = \frac{\omega_n a_n P^{\rm t}_n \sqrt{\eta D^{\tau}_n T_s B_n}}{\vartheta N_{0}B_n}\left[ (g^{\rm U}_n-g^{\rm L}_n) - \sum\limits_{i=0}^{N_{\rm r}}(N_{\rm r}-i) A^{\rm i}_n \right],
\end{equation}
where $\omega_n=\frac{1}{2\pi\sqrt{2^{2r^{\rm c}_n-1}}}$, $r^{\rm c}_n=\frac{b_n}{\eta D^{\tau}_n T_s B_n}$ is the number bits in each coding block, $g^{\rm U}_n=\frac{\vartheta N_{0}B_n\xi_n}{a_n P^{\rm t}_n}$, $g^{\rm L}_n=\frac{\vartheta N_{0}B_n\zeta_n}{a_n P^{\rm t}_n}$, $A^{\rm i}_n = \frac{(g^{\rm L}_n)^i }{i!}e^{-g^{\rm L}_n}-\frac{(g^{\rm U}_n)^i }{i!}e^{-g^{\rm U}_n}$, $\xi_n=\theta_n+\frac{1}{2\omega_n\sqrt{\eta D^{\tau}_n T_s B_n}}$, $\zeta_n=\theta_n-\frac{1}{2\omega_n\sqrt{\eta D^{\tau}_n T_s B_n}}$, and $\theta_n=2^{r^{\rm c}_n-1}$.

After $K$ repetitions, the packet loss probability in the communication system is given by
\begin{equation}\label{eq:epsilon_t}
\varepsilon^{\rm t}_n = (\bar{\varepsilon}^{\tau}_n)^{K_n}.
\end{equation}

From \eqref{eq:epsilon_t}, we can obtain the following property of $\varepsilon^{\rm t}_n$.
\begin{lemma}\label{prop:mono_epsilon_t}
When $D^{\tau}_n$ is given, $\varepsilon^{\rm t}_n$ strictly decreases with the repetition time $K_n$.
\end{lemma}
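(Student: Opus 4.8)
The plan is to argue that, once $D^{\tau}_n$ is held fixed, the base $\bar{\varepsilon}^{\tau}_n$ in \eqref{eq:epsilon_t} is a constant lying strictly between $0$ and $1$, so that $\varepsilon^{\rm t}_n=(\bar{\varepsilon}^{\tau}_n)^{K_n}$ is an exponential in $K_n$ with base in $(0,1)$ and therefore strictly decreasing in $K_n$.

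First I would observe that $K_n$ enters $\bar{\varepsilon}^{\tau}_n$ only through $D^{\tau}_n=D^{\rm t}_n/K_n$; since the lemma keeps $D^{\tau}_n$ fixed, the quantity $\bar{\varepsilon}^{\tau}_n$ defined by \eqref{eq:avg_epsilon_t} (equivalently, its approximation \eqref{eq:avg_epsilon_t_approx}) does not depend on $K_n$. It then suffices to show $0<\bar{\varepsilon}^{\tau}_n<1$. For this I would use the integral representation \eqref{eq:avg_epsilon_t}: for every $x>0$ the argument of the Q-function is finite, hence $0<f_{\rm Q}\{\cdot\}<1$ pointwise, and since $f_g$ is a probability density supported on $(0,\infty)$ we obtain $0<\int_0^\infty f_{\rm Q}\{\cdot\}\,f_g(x)\,dx<\int_0^\infty f_g(x)\,dx=1$, i.e. $\bar{\varepsilon}^{\tau}_n\in(0,1)$.

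With $c:=\bar{\varepsilon}^{\tau}_n\in(0,1)$ a constant in $K_n$, the conclusion is immediate: for repetition numbers $K_n^{(2)}>K_n^{(1)}$ we have $\varepsilon^{\rm t}_n(K_n^{(2)})=c^{K_n^{(1)}}\,c^{\,K_n^{(2)}-K_n^{(1)}}<c^{K_n^{(1)}}=\varepsilon^{\rm t}_n(K_n^{(1)})$ since $c^{\,K_n^{(2)}-K_n^{(1)}}<1$; equivalently, treating $K_n$ as continuous, $\partial\varepsilon^{\rm t}_n/\partial K_n=c^{K_n}\ln c<0$.

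The algebra here is routine; the only point that needs care is the strict two-sided bound on $\bar{\varepsilon}^{\tau}_n$, which is where I expect the actual content of the argument to sit. In particular, I would be careful that the upper bound is \emph{strict}: the Q-function never attains the value $1$ at a finite argument, and in \eqref{eq:avg_epsilon_t} the argument stays finite for all $x>0$ (it diverges to $-\infty$ only as $x\to 0^+$, a null set), so $1-f_{\rm Q}\{\cdot\}>0$ almost everywhere and the integral inequality is indeed strict. If one prefers to avoid the integral, one can instead note that any value produced by the normal approximation in the short-blocklength, finite-SNR regime considered throughout the paper is a genuine error probability, hence bounded away from both $0$ and $1$, which yields the same $c\in(0,1)$.
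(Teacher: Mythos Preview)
Your proposal is correct and follows exactly the paper's own argument: once $D^{\tau}_n$ is fixed, $\bar{\varepsilon}^{\tau}_n$ is a constant strictly less than $1$, so $\varepsilon^{\rm t}_n=(\bar{\varepsilon}^{\tau}_n)^{K_n}$ decreases in $K_n$. Your write-up is in fact more careful than the paper's, which simply asserts $\bar{\varepsilon}^{\tau}_n<1$ without the integral justification and does not explicitly address the lower bound $\bar{\varepsilon}^{\tau}_n>0$ needed for \emph{strict} decrease.
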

\begin{proof}
When $D^{\tau}_n$ is given, $\bar{\varepsilon}^{\tau}_n$ is fixed. According to \eqref{eq:epsilon_t}, $\varepsilon^{\rm t}_n$ decreases with $K_n$ since $\bar{\varepsilon}^{\tau}_n<1$.
\end{proof}

{Lemma \ref{prop:mono_epsilon_t} indicates that there is a tradeoff between the transmission delay and the reliability in communications. $K$-Repetition can be used to improve the transmission reliability at the cost of increasing the transmission delay.}

\section{{Prediction and Communication Co-design}} \label{section4}
{In the above tradeoff analyses, we obtained closed-form relations between each delay component (or prediction horizon) and its corresponding packet loss factor in terms of prediction, queueing and wireless transmission, respectively. Based on the above analyses, the tradeoff between the overall reliability and prediction horizon is revealed. As such, we could formulate the optimization problem in the following subsection.}

\subsection{Problem Formulation}
To maximize the number of devices that can be supported by the system, we optimize the delay components, prediction horizon, and bandwidth allocation of wireless networks. The optimization problem can be formulated as follows,


\begin{align}
\mathop {\mathop {\max }\limits_{{D^{\rm q}_n,D^{\rm t}_n,T^p_n,B_n,}} }\limits_{{n=1,...,N,}}
  \quad &N \label{eq:OptProblem1}\\
  \text{s.t.}
  \quad
    &\sum\limits_{n=1}^{N} B_n \le B_{\rm max}, \label{eq:OptB}\tag{\theequation a}\\
    &D^{\rm q}_n+D^{\rm t}_n+{D^{\rm d}_n}+D^{\rm r}_n-T^p_n \le D_{\rm max}, \label{eq:OptRelation_G}\tag{\theequation b}\\
    &\varepsilon^{\rm q}_n+\varepsilon^{\rm t}_n + \varepsilon^{\rm p}_n \le \varepsilon_{\rm max}, \label{eq:OptCommDelay_G}\tag{\theequation c}\\
    &\varepsilon^{\rm q}_n=\exp{\left\{ D^{\rm q}_n \left[\frac{\mathbb{W}_{-1}(-\lambda_n D^{\rm t}_n e^{-\lambda_n D^{\rm t}_n})}{D^{\rm t}_n} +\lambda_n\right] \right\}}, \label{eq:varepsilon_q_new}\tag{\theequation d}\\
    &\varepsilon^{\rm t}_n=\left\{\frac{\omega_n a_n P^{\rm t}_n \sqrt{\eta D^{\tau}_n T_s B_n}}{\vartheta N_{0}B_n}\left[ (g^{\rm U}_n-g^{\rm L}_n) - \sum\limits_{i=0}^{N_{\rm r}}(N_{\rm r}-i) A^{i}_n \right]\right\}^{K_n}, K_n D^{\tau}_n=D^{\rm t}_n \label{eq:varepsilon_t_new}\tag{\theequation e}\\
    &\varepsilon^{\rm p}_n=1-\prod\limits_{j=1}^{F} \left[1-\psi\left(\frac{-\delta_j}{\sqrt{\sigma_{j}^2+\sum\limits_{i=1}^{T^p_n-1}\sum\limits_{m=1}^{F}(\phi_{n,j,m,T^p_n-i})^2\sigma_{m}^2}}\right)\right], \label{eq:varepsilon_p_new}\tag{\theequation f}\\
    &n=1,2,3,\cdots,N, \label{eq:n_range}\tag{\theequation g}
\end{align}
where \eqref{eq:OptB} is the constraint on total bandwidth, \eqref{eq:OptRelation_G} is the constraint on user experienced delay, \eqref{eq:OptCommDelay_G} is the constraint on reliability. \eqref{eq:varepsilon_q_new} is obtained by substituting \eqref{eq:varepsilon_q2} and  \eqref{eq:crosslayer} into \eqref{eq:varepsilon_q}, \eqref{eq:varepsilon_t_new} is obtained from \eqref{eq:avg_epsilon_t_approx} and \eqref{eq:epsilon_t}, and \eqref{eq:varepsilon_p_new} is obtained by substituting \eqref{eq:sigma_n_j} into \eqref{eq:PredictErrCalc}.

Problem \eqref{eq:OptProblem1} is not a deterministic optimization problem since the numbers of optimization variables and constraints depend on the number of users, which is not given. In addition, some optimization variables are integers and the constraints in \eqref{eq:OptCommDelay_G}, \eqref{eq:varepsilon_q_new}, and \eqref{eq:varepsilon_t_new} are non-convex. Thus, it is very challenging to solve this problem.

\subsection{Algorithm for Solving Problem \eqref{eq:OptProblem1}}
To solve the problem \eqref{eq:OptProblem1}, we first find the minimal bandwidth $B_n$ required for each user to ensure its delay and reliability requirements, i.e., $(D_{\rm max},\varepsilon_{\rm max})$. By minimizing the bandwidth allocated to each user, the total number of users that can be supported with a given amount of total bandwidth can be maximized. Without the constraint on total bandwidth, the problem \eqref{eq:OptProblem1} can be decomposed into multiple single-user problems:
\begin{align}
\min\limits_{D^{\rm q}_n,D^{\rm t}_n,T^p_n}
  \quad &B_n \label{eq:OptProblem_G}\\
  \text{s.t.}
  \quad
    &\eqref{eq:OptRelation_G},\eqref{eq:OptCommDelay_G},\eqref{eq:varepsilon_q_new},\eqref{eq:varepsilon_t_new}\ \rm{and}\ \eqref{eq:varepsilon_p_new}.
\end{align}

To solve the above problem, we need the minimal bandwidth that is required to ensure a certain overall reliability. We denote it as $B_n^{\rm min}(\varepsilon_{n}^{\rm o})$. However, deriving the expression of  $B_n^{\rm min}(\varepsilon_{n}^{\rm o})$ is very difficult. To overcome this difficulty, we first minimize $\varepsilon_{n}^{\rm o}$ for a given $B_n$. Then, we find the minimal required bandwidth that can satisfy $\varepsilon_{n}^{\rm o} \leq \varepsilon_{\max}$ via binary search.

When $B_n$ is given, the minimal overall error probability can be obtained by optimizing $T^p_n$ in solving the following problem,
\begin{align}
\varepsilon_{n}^{\rm o,\rm min}(B_n)=\min\limits_{D^{\rm q}_n,D^{\rm t}_n,T^p_n}
  \quad &\varepsilon^{\rm q}_n+\varepsilon^{\rm t}_n+\varepsilon^{\rm p}_n \label{eq:OptProblem2_G}\\
  \text{s.t.}
  \quad
  &\eqref{eq:OptRelation_G},\eqref{eq:varepsilon_q_new},\eqref{eq:varepsilon_t_new}\ \rm{and}\ \eqref{eq:varepsilon_p_new}, \nonumber
\end{align}
For mathematical {tractability}, we set $\varepsilon^{\rm q}_n=\varepsilon^{\rm t}_n$. According to \cite{she2018cross}, this simplification leads to negligible performance loss. We will first prove $\varepsilon^{\rm q}_n$ and $\varepsilon^{\rm t}_n$ decreases with $T^p_n$ in the Proposition~\ref{prop:mono_n} when $\varepsilon^{\rm q}_n=\varepsilon^{\rm t}_n$.

\begin{proposition}\label{prop:mono_n}
$\varepsilon^{\rm q}_n$ and $\varepsilon^{\rm t}_n$ decrease with $T^p_n$ when $\varepsilon^{\rm q}_n=\varepsilon^{\rm t}_n$.
\end{proposition}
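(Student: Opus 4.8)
\emph{Proof strategy.} The plan is to read Proposition~\ref{prop:mono_n} as a statement about the optimal value of the inner problem \eqref{eq:OptProblem2_G} after it has been parametrized by $T^p_n$. For any fixed $T^p_n$, constraint \eqref{eq:varepsilon_p_new} pins down $\varepsilon^{\rm p}_n$, so it only adds a constant to the objective; hence minimizing $\varepsilon^{\rm q}_n+\varepsilon^{\rm t}_n+\varepsilon^{\rm p}_n$ over $(D^{\rm q}_n,D^{\rm t}_n,K_n)$ is equivalent to minimizing $\varepsilon^{\rm q}_n$ — which under the imposed simplification equals $\varepsilon^{\rm t}_n$ — subject to the delay-budget constraint \eqref{eq:OptRelation_G}, i.e. $D^{\rm q}_n+(1+\kappa)D^{\rm t}_n+D^{\rm r}_n\le D_{\rm max}+T^p_n$, together with \eqref{eq:varepsilon_q_new}, \eqref{eq:varepsilon_t_new} and $\varepsilon^{\rm q}_n=\varepsilon^{\rm t}_n$. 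Writing $\varepsilon^{\rm c}_n(T^p_n)$ for this optimal common value, the goal is to show $\varepsilon^{\rm c}_n(T^p_n)$ decreases in $T^p_n$.

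The first step is a nesting argument giving weak monotonicity. Among all constraints of the inner problem, only the delay budget \eqref{eq:OptRelation_G} contains $T^p_n$, and it is \emph{relaxed} as $T^p_n$ grows; constraints \eqref{eq:varepsilon_q_new}, \eqref{eq:varepsilon_t_new} and $\varepsilon^{\rm q}_n=\varepsilon^{\rm t}_n$ are independent of $T^p_n$ (in \eqref{eq:OptProblem2_G} the bandwidth $B_n$ is a fixed parameter). Hence for $T^p_{n,1}<T^p_{n,2}$ every $(D^{\rm q}_n,D^{\rm t}_n,K_n)$ feasible under $T^p_{n,1}$ is feasible under $T^p_{n,2}$, so $\varepsilon^{\rm c}_n(T^p_{n,2})\le\varepsilon^{\rm c}_n(T^p_{n,1})$, and the same holds for $\varepsilon^{\rm t}_n$ because $\varepsilon^{\rm t}_n=\varepsilon^{\rm q}_n$ along the optimizer.

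To upgrade this to a strict decrease I would first show that \eqref{eq:OptRelation_G} is active at the optimizer: if there were slack, one could raise $D^{\rm q}_n$ a little — which strictly lowers $\varepsilon^{\rm q}_n$ by Lemma~\ref{prop:mono_epsilon_q}, with $E^{\rm B}_n=1/D^{\rm t}_n$ from \eqref{eq:crosslayer} held fixed — and simultaneously raise $D^{\rm t}_n$, hence $D^{\tau}_n=D^{\rm t}_n/K_n$, so that $\bar\varepsilon^{\tau}_n$ in \eqref{eq:varepsilon_t_new} drops and $\varepsilon^{\rm t}_n$ follows it down, choosing the two increments so the equality $\varepsilon^{\rm q}_n=\varepsilon^{\rm t}_n$ is preserved; this strictly improves the objective, contradicting optimality. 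With \eqref{eq:OptRelation_G} tight we have $D^{\rm q}_n+(1+\kappa)D^{\rm t}_n+D^{\rm r}_n=D_{\rm max}+T^p_n$, so an increment of $T^p_n$ frees exactly that much extra budget, which the same perturbation consumes to strictly cut $\varepsilon^{\rm c}_n$.

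I expect the construction of that budget-preserving perturbation to be the main obstacle, because $D^{\rm t}_n$ enters $\varepsilon^{\rm q}_n$ through $E^{\rm B}_n=1/D^{\rm t}_n$ with the ``wrong'' sign (a larger $D^{\rm t}_n$ means a smaller service rate and hence a larger $\varepsilon^{\rm q}_n$), so one must verify that nudging $(D^{\rm q}_n,D^{\rm t}_n)$ jointly along the budget line while staying on the surface $\varepsilon^{\rm q}_n=\varepsilon^{\rm t}_n$ still decreases the common value. Concretely I would linearize the two equalities and do the sign bookkeeping on the partials — $\partial\varepsilon^{\rm q}_n/\partial D^{\rm q}_n<0$ from Lemma~\ref{prop:mono_epsilon_q}, $\partial\varepsilon^{\rm t}_n/\partial D^{\rm t}_n<0$ from the monotonicity of $\bar\varepsilon^{\tau}_n$ in blocklength in \eqref{eq:varepsilon_t_new}, and $\partial\varepsilon^{\rm q}_n/\partial D^{\rm t}_n>0$ from \eqref{eq:varepsilon_q_new} — which together force the resulting $d\varepsilon^{\rm c}_n/dT^p_n$ to be negative. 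A secondary point is the integrality of $K_n$: I would hold $K_n$ fixed along the perturbation (so $D^{\tau}_n$ scales with $D^{\rm t}_n$) and invoke Lemma~\ref{prop:mono_epsilon_t} only to compare across the discrete values of $K_n$, noting the optimal $K_n$ is a nondecreasing step function of the admissible $D^{\rm t}_n$ and so cannot overturn the monotonicity.
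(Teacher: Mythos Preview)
Your nesting argument is correct and gives the weak monotonicity almost for free: among the constraints of \eqref{eq:OptProblem2_G} only the delay budget \eqref{eq:OptRelation_G} contains $T^p_n$, and it relaxes as $T^p_n$ grows, so the feasible set is nested and the optimal common value cannot increase. This is more elementary than what the paper does, and it is a genuinely different route.

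The paper's proof, by contrast, never views the problem as an optimization over $(D^{\rm q}_n,D^{\rm t}_n)$ for fixed $T^p_n$. Instead it takes the delay budget with equality from the outset and \emph{parametrizes the constraint curve by $D^{\rm t}_n$}: from \eqref{eq:varepsilon_q_new} it writes $D^{\rm q}_n=\ln(\varepsilon^{\rm q}_n)/\phi(\lambda_n,1/D^{\rm t}_n)$ and computes $\partial D^{\rm q}_n/\partial\varepsilon^{\rm q}_n<0$ and $\partial D^{\rm q}_n/\partial D^{\rm t}_n>0$ directly. It then invokes Lemma~\ref{prop:mono_epsilon_t} (with $D^{\tau}_n$ fixed, so $D^{\rm t}_n=K_nD^{\tau}_n$) to get that $\varepsilon^{\rm t}_n$---hence $\varepsilon^{\rm q}_n$ by the imposed equality---is strictly decreasing in $D^{\rm t}_n$; the two partials then force $D^{\rm q}_n$ to increase with $D^{\rm t}_n$ as well, so $D^{\rm t}_n\mapsto D^{\rm q}_n+D^{\rm t}_n=D_{\max}+T^p_n-D^{\rm r}_n$ is strictly increasing and invertible. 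Reading the composition backwards gives the strict decrease of $\varepsilon^{\rm q}_n=\varepsilon^{\rm t}_n$ in $T^p_n$.

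The payoff of the paper's parametrization is precisely that it dissolves the ``wrong-sign'' obstacle you anticipate. By moving along $D^{\rm t}_n$ rather than $T^p_n$, the adverse effect of $D^{\rm t}_n$ on $\varepsilon^{\rm q}_n$ through $E^{\rm B}_n=1/D^{\rm t}_n$ is not fought against; it is absorbed into the induced increase of $D^{\rm q}_n$, and one never has to linearize two equalities simultaneously or do the sign bookkeeping on $d\varepsilon^{\rm c}_n/dT^p_n$. Your approach buys a cleaner conceptual picture with no calculus for the weak inequality; the paper's buys strict monotonicity in one stroke, bypassing the perturbation step you correctly flag as the hard part of your plan.
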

\begin{proof}
Please see Appendix \ref{appendix:proposition:mono_n}.
\end{proof}

{Proposition \ref{prop:mono_n} reveals the relation between the reliability of the queueing system (or the reliability of the wireless link) and the prediction horizon. With this relation, the number of independent optimization variables can be reduced.}

It can be recalled that $\varepsilon^{\rm p}_n$ increases with $T^p_n$. Thus, {together with Proposition 1}, the optimal solution is obtained when the equality in \eqref{eq:OptRelationEqual_G} holds, which is
\begin{equation}\label{eq:OptRelationEqual_G}
D^{\rm q}_n+D^{\rm t}_n+D^{\rm r}_n-T^p_n = D_{\rm max}.
\end{equation}
Moreover, for a given value of $T^p_n$, the values of $D^{\rm q}_n$ and $D^{\rm t}_n$ that satisfies $\varepsilon^{\rm q}_n=\varepsilon^{\rm t}_n$ and \eqref{eq:OptRelationEqual_G} can be obtained via binary search. Therefore, we only need to optimize $T^p_n$ in problem \eqref{eq:OptProblem2_G}. The optimal solution and the minimal overall reliability in this simplified scenario are denoted as $T_n^{p*}$ and $\varepsilon_{n}^{\rm o,\rm min *}(B_n)$, respectively.


Unfortunately, the simplified problem is still non-convex. As such, we will propose an approximated solution as follows. According to {Lemma} \ref{prop:mono_epsilon_p}, $\varepsilon^{\rm p}_n$ increases with $T^p_n$, and we have proved $\varepsilon^{\rm q}_n$ and $\varepsilon^{\rm t}_n$ decreases with $T^p_n$ in Proposition \ref{prop:mono_n}. A near optimal solution can be obtained when $\varepsilon^{\rm q}_n+\varepsilon^{\rm t}_n = \varepsilon^{\rm p}_n$. Since the optimization variables are not integers, $\varepsilon^{\rm q}_n+\varepsilon^{\rm t}_n = \varepsilon^{\rm p}_n$ may not hold strictly. To address this issue, we can use binary search to find $\tilde{T}^p_n$ that satisfies $\varepsilon^{\rm p}_n \le 2\varepsilon^{\rm t}_n$ when $T^p_n \le \tilde{T}^p_n$, and $\varepsilon^{\rm p}_n> 2\varepsilon^{\rm t}_n$ when $T^p_n > \tilde{T}^p_n$. The corresponding reliability is denoted as $\hat{\varepsilon}_{n}^{\rm o,\rm min}(B_n)$. The overall reliability achieved by this near optimal solution is denoted as $\hat{\varepsilon}_{n}^{\rm o,\rm min}(B_n)$.

{The performance gap between the near optimal solution and optimal one is analyzed in the following Proposition \ref{prop:error_bound}.}
{\begin{proposition}\label{prop:error_bound}
The gap between $\hat{\varepsilon}_{n}^{\rm o,\rm min}(B_n)$ and $\varepsilon_{n}^{\rm o,\rm min *}(B_n)$ is less than $\varepsilon_{n}^{\rm o,\rm min *}(B_n)$, where $\varepsilon_{n}^{\rm o,\rm min *}(B_n)$ is the reliability achieved by the optimal solution.
\end{proposition}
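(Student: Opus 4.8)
The plan is to work with the single–variable reduction of problem \eqref{eq:OptProblem2_G} that the paper has already carried out. With $\varepsilon^{\rm q}_n=\varepsilon^{\rm t}_n$ and the tight delay equality \eqref{eq:OptRelationEqual_G}, the pair $(D^{\rm q}_n,D^{\rm t}_n)$ is a continuous function of $T^p_n$ (the output of the inner binary search), so the objective of \eqref{eq:OptProblem2_G} becomes
\begin{equation*}
\varepsilon^{\rm o}_n(T^p_n)\;=\;2\varepsilon^{\rm t}_n(T^p_n)+\varepsilon^{\rm p}_n(T^p_n),
\end{equation*}
where $\varepsilon^{\rm t}_n(\cdot)$ is continuous and strictly decreasing (Proposition \ref{prop:mono_n}) and $\varepsilon^{\rm p}_n(\cdot)$ is continuous and strictly increasing (Lemma \ref{prop:mono_epsilon_p}). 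Since $\tilde{T}^p_n$ separates the region $\{\varepsilon^{\rm p}_n\le 2\varepsilon^{\rm t}_n\}$ from $\{\varepsilon^{\rm p}_n>2\varepsilon^{\rm t}_n\}$, continuity forces it to be the balance point, $\varepsilon^{\rm p}_n(\tilde{T}^p_n)=2\varepsilon^{\rm t}_n(\tilde{T}^p_n)$, whence
\begin{equation*}
\hat{\varepsilon}_{n}^{\rm o,\rm min}(B_n)=\varepsilon^{\rm o}_n(\tilde{T}^p_n)=4\varepsilon^{\rm t}_n(\tilde{T}^p_n)=2\varepsilon^{\rm p}_n(\tilde{T}^p_n).
\end{equation*}

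Given this balance identity, the claim follows by a case split on where the true optimizer $T_n^{p*}$ (a minimizer of $\varepsilon^{\rm o}_n(\cdot)$, attaining $\varepsilon_{n}^{\rm o,\rm min *}(B_n)$) lies relative to $\tilde{T}^p_n$. If $T_n^{p*}<\tilde{T}^p_n$, strict monotonicity of $\varepsilon^{\rm t}_n$ gives $\hat{\varepsilon}_{n}^{\rm o,\rm min}(B_n)=4\varepsilon^{\rm t}_n(\tilde{T}^p_n)<4\varepsilon^{\rm t}_n(T_n^{p*})\le 4\varepsilon^{\rm t}_n(T_n^{p*})+2\varepsilon^{\rm p}_n(T_n^{p*})=2\varepsilon_{n}^{\rm o,\rm min *}(B_n)$. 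If $T_n^{p*}>\tilde{T}^p_n$, strict monotonicity of $\varepsilon^{\rm p}_n$ gives $\hat{\varepsilon}_{n}^{\rm o,\rm min}(B_n)=2\varepsilon^{\rm p}_n(\tilde{T}^p_n)<2\varepsilon^{\rm p}_n(T_n^{p*})\le 2\bigl(2\varepsilon^{\rm t}_n(T_n^{p*})+\varepsilon^{\rm p}_n(T_n^{p*})\bigr)=2\varepsilon_{n}^{\rm o,\rm min *}(B_n)$. If $T_n^{p*}=\tilde{T}^p_n$, then $\hat{\varepsilon}_{n}^{\rm o,\rm min}(B_n)=\varepsilon_{n}^{\rm o,\rm min *}(B_n)<2\varepsilon_{n}^{\rm o,\rm min *}(B_n)$ since $\varepsilon_{n}^{\rm o,\rm min *}(B_n)>0$. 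In all cases $\hat{\varepsilon}_{n}^{\rm o,\rm min}(B_n)<2\varepsilon_{n}^{\rm o,\rm min *}(B_n)$, and since $\hat{\varepsilon}_{n}^{\rm o,\rm min}(B_n)\ge\varepsilon_{n}^{\rm o,\rm min *}(B_n)$ (the latter is the minimum of $\varepsilon^{\rm o}_n(\cdot)$), the gap $\hat{\varepsilon}_{n}^{\rm o,\rm min}(B_n)-\varepsilon_{n}^{\rm o,\rm min *}(B_n)$ is nonnegative and strictly smaller than $\varepsilon_{n}^{\rm o,\rm min *}(B_n)$, which is the assertion.

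The only delicate point is the balance identity in the first paragraph: the factor of two (and not a larger constant) appears precisely because $\tilde{T}^p_n$ is placed where the two error contributions are equalized, so that $\hat{\varepsilon}_{n}^{\rm o,\rm min}(B_n)$ is simultaneously twice $2\varepsilon^{\rm t}_n(\tilde{T}^p_n)$ and twice $\varepsilon^{\rm p}_n(\tilde{T}^p_n)$; a $\tilde{T}^p_n$ picked carelessly elsewhere inside $\{\varepsilon^{\rm p}_n\le 2\varepsilon^{\rm t}_n\}$ need not satisfy the bound. Making this rigorous requires the continuity together with the strict monotonicity in $T^p_n$ of $\varepsilon^{\rm p}_n$ and of $\varepsilon^{\rm t}_n=\varepsilon^{\rm q}_n$ under the coupling $\varepsilon^{\rm q}_n=\varepsilon^{\rm t}_n$ and \eqref{eq:OptRelationEqual_G} — the monotonicities are Lemma \ref{prop:mono_epsilon_p} and Proposition \ref{prop:mono_n}, and the continuity of the induced $(D^{\rm q}_n,D^{\rm t}_n)$ split is the remaining (routine) ingredient. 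Once the identity is in place, the rest is the two-line monotonicity argument above.
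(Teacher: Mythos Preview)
Your argument is correct and is essentially the paper's own proof, just packaged slightly differently. The paper introduces the auxiliary envelope $\varepsilon_n^{\rm o,ub}(T^p_n)=2\max\{2\varepsilon^{\rm t}_n,\varepsilon^{\rm p}_n\}$, observes it is minimized at the balance point $\tilde{T}^p_n$ where $\varepsilon^{\rm o}_n\approx\varepsilon_n^{\rm o,ub}$, and then uses $\varepsilon_n^{\rm o,ub}(\tilde{T}^p_n)\le\varepsilon_n^{\rm o,ub}(T_n^{p*})<2\varepsilon^{\rm o}_n(T_n^{p*})$; your case split on the location of $T_n^{p*}$ relative to $\tilde{T}^p_n$ is exactly what that $\max$-envelope inequality encodes, so the two arguments are the same idea with different bookkeeping. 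The only nuance is that the paper treats $T^p_n$ as integer-valued (slots) and therefore writes the balance as an approximate equality and allows $\hat{T}^p_n\in\{\tilde{T}^p_n,\tilde{T}^p_n+1\}$, whereas you invoke continuity to get exact equality; either way the factor-of-two bound goes through.
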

\begin{proof}
Please see Appendix \ref{appendix:proposition:bound}.
\end{proof}
}

{Proposition \ref{prop:error_bound} shows that the gap between the near optimal overall reliability and the optimal one is bounded by the value of the optimal overall reliability. Since the optimal overall reliability is in the order of $10^{-5}$, the gap is very small.}

The required minimal bandwidth to guarantee the overall reliability can be obtained from the following optimization problem,
\begin{align}
\mathop {\min }\limits_{B_n} & \;B_n \label{eq:sub2}\\
\text{s.t.}\;& \hat \varepsilon _{n}^{\rm o,\rm min}(B_n) \leq \varepsilon_{\max}.\label{eq:reliable}\tag{\theequation a}
\end{align}
Since the packet loss in the communication system decreases with bandwidth, the optimal solution of problem \eqref{eq:sub2} is achieved when the equality in \eqref{eq:reliable} holds. Thus, the minimal bandwidth can be obtained via binary search. The algorithm to solve problem \eqref{eq:OptProblem_G} is summarized in Table \ref{T:algorithm}.

\subsection{Discussions on Implementation Complexity and Optimality}
%
The original optimization problem is decomposed into $N$ single-user problems. To solve each single-user problem, we search the required bandwidth and optimal prediction horizon in the regions $[0,\overline{B}]$ and $[0,\overline{T}^p]$, respectively, where $\overline{B}$ and $\overline{T}^p$ are the upper bounds of bandwidth and prediction horizon. Therefore, the complexity of the proposed algorithm is around ${\mathcal{O}}\left(N\log_2{(\bar{B})}\log_2{(\bar{T}^p)}\right)$.

The performance loss of the near optimal solution relative to the global optimal solution results from simplification $\varepsilon^{\rm q}_n=\varepsilon^{\rm t}_n$ and the differences between $\hat{\varepsilon}_{n}^{\rm o,\rm min}(B_n)$ and $\varepsilon_{n}^{\rm o,\rm min *}(B_n)$. According to the analysis in \cite{she2018cross} and Proposition \ref{prop:error_bound}, the performance loss is minor. We will further validate the performance loss with numerical results.

\renewcommand{\algorithmicrequire}{\textbf{Input:}}
\renewcommand{\algorithmicensure}{\textbf{Output:}}
\begin{table}[htb]\small
\caption{Algorithm to solve \eqref{eq:OptProblem_G}}
\label{T:algorithm}
\vspace{-0.3cm}
\begin{tabular}{p{16cm}}
\\\hline
\end{tabular}
\vspace{-0.3cm}
\begin{algorithmic}[1]
\REQUIRE User-experienced delay requirement $D_{\rm max}$, reliability requirement $\varepsilon_{\rm max}$, user number $N$, average packet arrival rate $\lambda_n$, each packet duration $D^{\tau}$, slot duration $T_s$, bandwidth of each subcarrier $B_0$, upper bound of bandwidth $\overline{B}$, upper bound of prediction horizon $\overline{T}^p$, transmit power $P^{\rm t}$, user location $d_n$ transition noise $\sigma_j$, initial noise $\sigma_j$, threshold $\delta_j$, $j=1,2,\cdots,F$.
\ENSURE The minimal bandwidth $B_n^{*}$ to ensure URLLC for the $n$th user.
\STATE $B_{\rm L}=B_0,B_{\rm R}=\overline{B}$.
\STATE $B_{\rm b}=\frac{1}{2}\left( B_{\rm L}+B_{\rm R}\right)$.
\STATE Binary search $T^{\rm p}_n$ in a range of $(0,\overline{T}^p]$ and obtain $\hat\varepsilon_{n}^{o,\rm min}(B_{\rm b})$.
\WHILE{$\left|\hat\varepsilon_{n}^{o,\rm min}(B_{\rm b})-\varepsilon_{\rm max}\right|<\varepsilon_{\rm max}$}
\IF{$\hat\varepsilon_{n}^{o,\rm min}(B_{\rm b})>\varepsilon_{\rm max}$}
\STATE $B_{\rm L}=B_{\rm b}$.
\ELSE
\STATE $B_{\rm R}=B_{\rm b}$.
\ENDIF
\STATE $B_{\rm b}=\frac{1}{2}\left( B_{\rm L}+B_{\rm R}\right)$.
\STATE Binary search $T^{\rm p}_n$ in a range of $(0,\overline{T}^p]$ and obtain $\hat\varepsilon_{n}^{o,\rm min}(B_{\rm b})$.
\ENDWHILE
\RETURN $B_{n}^{*}=B_{\rm b}$.
\end{algorithmic}
\vspace{-0.2cm}
\begin{tabular}{p{16cm}}
\\
\hline
\end{tabular}
\vspace{-0.2cm}
\end{table}


\section{Performance Evaluation}
In this section, we evaluate the effectiveness of the proposed co-design method via simulations and experiments.

\subsection{Simulations}
In the simulations, we consider a one-dimensional movement as an example to evaluate the proposed co-design method. With this example, we show how the proposed method helps improving the tradeoffs among latency, reliability and resource utilizations (i.e., bandwidth and antenna). For comparison, the performance achieved by the traditional transmission scheme with no prediction is provided. The simulation parameters are listed in Table \ref{tabel:system_settings}. {In all simulations, SNRs are computed according to $\gamma_n=\frac{a_n g_n P^{\rm t}_n}{\vartheta N_0 B_n}$.} The path loss model is $10\log_{10}({ a}_n) = 35.3 + 37.6 \log_{10}(d_n) + S_n$, where $d_n$ is the distance from the $n$th device to the AP and $S_n$ is the shadowing. The shadowing $S_n$ follows log normal distribution with a zero mean and a standard deviation of $8$. {To ensure the reliability and latency requirements, we consider the worst case of shadowing $S_{\rm w}=-34.1$~dB (i.e., $\Pr\{S_n \le S_{\rm w}\}= 10^{-5}$), which is defined as the probability that the delay and reliability of a device can be satisfied \cite{She2018availability}.}

For the one-dimensional movement, the state transition function in \eqref{eq:state_transition} can be simplified as follows \cite{kay1993fundamentals},
\begin{equation}
\left[ \begin{array}{c}
  r{(k+1)} \\
  v{(k+1)} \\
  a{(k+1)} \\
\end{array} \right]
=
\left[  \begin{array}{ccc}
  1 & T_s & \frac{T_s^2}{2}\\
  0 & 1   & T_s \\
  0 & 0   & 1 \\
\end{array} \right]
\left[ \begin{array}{c}
  r{(k)} \\
  v{(k)} \\
  a{(k)} \\
\end{array} \right]
+
\left[ \begin{array}{c}
  0 \\
  0 \\
  w{(k)} \\
\end{array} \right].\nonumber
\end{equation}
where $r{(k)}$, $v{(k)}$, and $a{(k)}$ represent the location, velocity and acceleration in the $k$th slot, respectively, $w{(k)}$ is the Gaussian noise on acceleration, and $\Phi$ is given by
\begin{equation}\label{eq:Phi}
\Phi =
\left[  \begin{array}{ccc}
  1 & T_s & \frac{T_s^2}{2}\\
  0 & 1   & T_s \\
  0 & 0   & 1 \\
\end{array} \right],
\end{equation}
which follows {Newton's laws of motion}. In predictions, the standard {deviation} of the transition noise of acceleration is $\sigma_w=0.01$~m/s$^2$, and the required threshold is $\delta_l$=$0.1$~m. The standard derivatives of the initial errors of location, velocity and acceleration are set to be $0.01$~m,~$0.2$~m/s, and $0.1$~m/s$^2$, respectively. In practice, the values of initial errors depend on the accuracy of observation and residual filter errors \cite{kay1993fundamentals}.

\begin{table}[!t]
\renewcommand{\arraystretch}{1}
\caption{Simulation Parameters \cite{3GPP2017Scenarios}}
\label{tabel:system_settings}
\centering
\begin{tabular}{|l|l|}
\hline
\textbf{Parameters}                                      & \textbf{Values}        \\
\hline
Maximal transmit power of a user $P_{\rm t}$          & $23$~dBm          \\
\hline
Single-sided noise spectral density $N_0$       & $-174$~dBm/Hz       \\
\hline
Information load per block b                             & $160$~bits         \\
\hline
Average packet arrival rate $\lambda$                             & $100$~packets/second         \\
\hline
Slot duration $T_{\rm s}$                             & $0.1$~ms         \\
\hline
Transmission duration $D_{\tau}$                                     & $0.5$~ms  \\
\hline
Delay of core network and backhaul $D_{\rm r}$                    & $10$~ms   \\
\hline
\end{tabular}
\end{table}

\subsubsection{Single-user scenarios} In single-user scenarios, the distance between the user and the AP is set to be $200$~m. To evaluate the proposed co-design method, the prediction horizon $T^{\rm p}_n$ is optimized to obtain the minimal overall error probability.

\begin{figure}[htb]
	\centering
	\begin{minipage}[t]{0.65\textwidth}
		\includegraphics[width=1\textwidth]{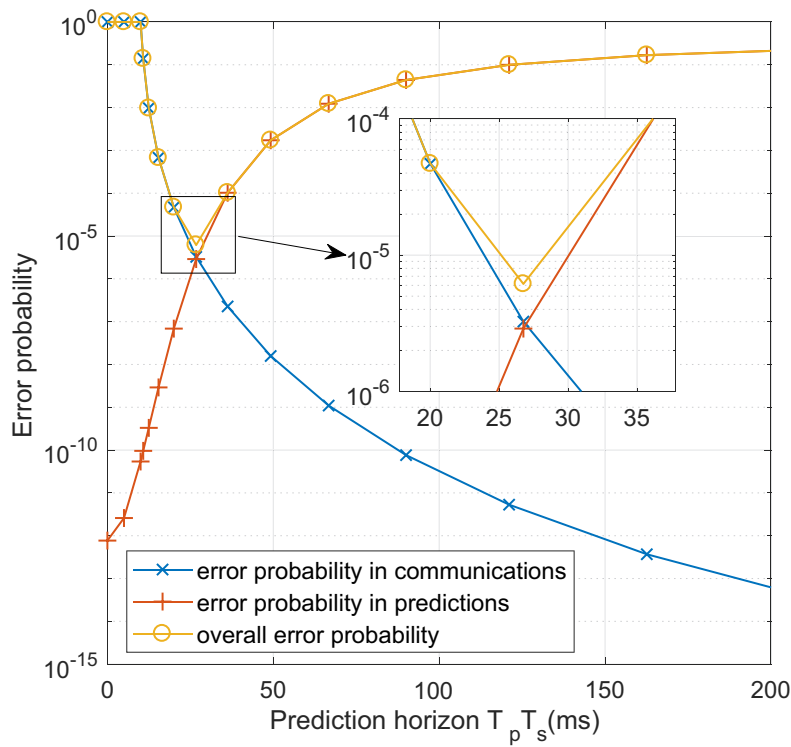}
	\end{minipage}
	\caption{Joint optimization of predictions and communications: the packet loss probability $\varepsilon_c$ in communications, the prediction error probability $\varepsilon_p$, and the over error probability $\varepsilon_o$ are drawn as functions of prediction horizon $T_{\rm p}T_{\rm s}$.  }
	\label{fig:co_disign}
\end{figure}

Under the given delay requirement (i.e., $D_{\max} = 0$~ms), the packet loss probability in communications $\varepsilon^{\rm c}_n$, the prediction error probability $\varepsilon^{\rm p}_n$, and the overall error probability $\varepsilon^{\rm o}_n$ are shown in Fig. \ref{fig:co_disign}. To achieve target reliability, the bandwidth $B$ is set as $B=440$~KHz and the number of antennas at the AP is set to be $N_{\rm r}=32$. It should be noted that the reliability depends on the amount of bandwidth and the number of antennas, but the trend of the overall reliability does not change.

In Fig. \ref{fig:co_disign}, the communication delay and prediction horizon are set to be equal, i.e., $D^{\rm q}_n+D^{\rm t}_n=T^{\rm p}_n$. In this case, user experienced delay is zero. The results in Fig. \ref{fig:co_disign} show that when the E2E communication delay $D^{\rm q}_n+D^{\rm t}_n=T^{\rm p}_n<10$~ms, i.e., less than the delays in the core network and the backhaul $D^{\rm r}_n$, it is impossible to achieve zero latency without prediction. When $D^{\rm q}_n+D^{\rm t}_n=T^{\rm p}_n>10$~ms, the required transmission duration $K D^{\tau}_n$ increases with prediction horizon $T^{\rm p}_n$. As a result, the overall error probability, $\varepsilon^{\rm o}_n$, is first dominated by $\varepsilon^{\rm c}_n$ and then by $\varepsilon^{\rm p}_n$. As such, $\varepsilon^{\rm o}_n$ first decreases and then increases with $T^{\rm p}_n$. {The results in Fig. \ref{fig:co_disign} indicate that the reliability achieved by the proposed method is $6.52 \times 10^{-6}$ with $T^{\rm p}_n=26.8$~ms, $D^{\rm t*}_n=12.5$~ms, $D^{\rm q*}_n=14.3$~ms and $K_n^*=5$. The optimal solution obtained by exhaustive search is $6.15 \times 10^{-6}$. The gap between above two solutions is $3.7 \times 10^{-7}$, which is very small.}

\begin{figure}[htb]
	\centering
	\begin{minipage}[t]{0.65\textwidth}
		\includegraphics[width=1\textwidth]{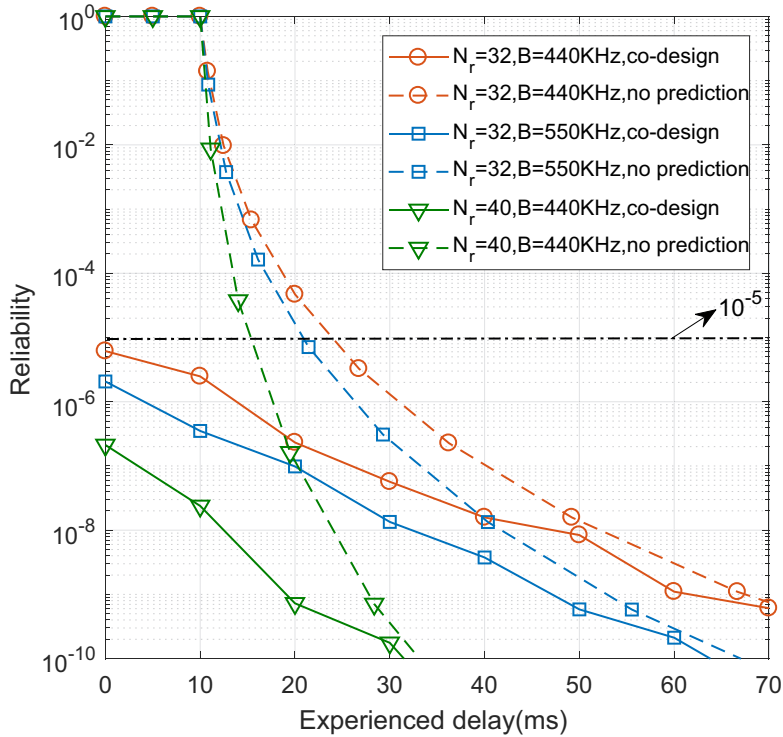}
	\end{minipage}
	\caption{Comparison of reliability-delay tradeoff curves between co-design and no predictions with different bandwidth $B$ and numbers of received antennas $N_{\rm r}$.}
	\label{fig:comp_tradeoff}
\end{figure}

In Fig. \ref{fig:comp_tradeoff}, the proposed co-design method is compared with a baseline method without prediction. When there is no prediction, the user experienced delay equals to communication delay. The results in Fig. \ref{fig:comp_tradeoff} show that when the requirement on user experienced delay is less than $10$~ms, it cannot be satisfied without prediction. When the required user experienced delay is larger than $10$~ms, the reliability achieved by the co-design method is much better than the baseline method. In other words, by prediction and communication co-design, the tradeoff between user experienced delay and overall reliability can be improved remarkably. Particularly, in the case $N_{\rm r}=32$ and $B=440$~KHz, to ensure the same reliability $10^{-5}$, the user experienced delay can be reduced by $23$~ms and zero-latency can be achieved by the proposed co-design method.

\subsubsection{Multiple-user scenarios}
{In multiple-user scenarios, we will consider two scenarios: the distribution of large-scale fading of
the mobile devices is available/unavaibale.} In the first scenario, the distances from devices to the AP are uniformly distributed in the region $[50,200]$~m. In the second scenario, the worst case is considered in the optimization, i.e., the distances from all the devices to the AP are $200$~m.

\begin{figure}[htb]
	\centering
	\begin{minipage}[t]{0.65\textwidth}
		\includegraphics[width=1\textwidth]{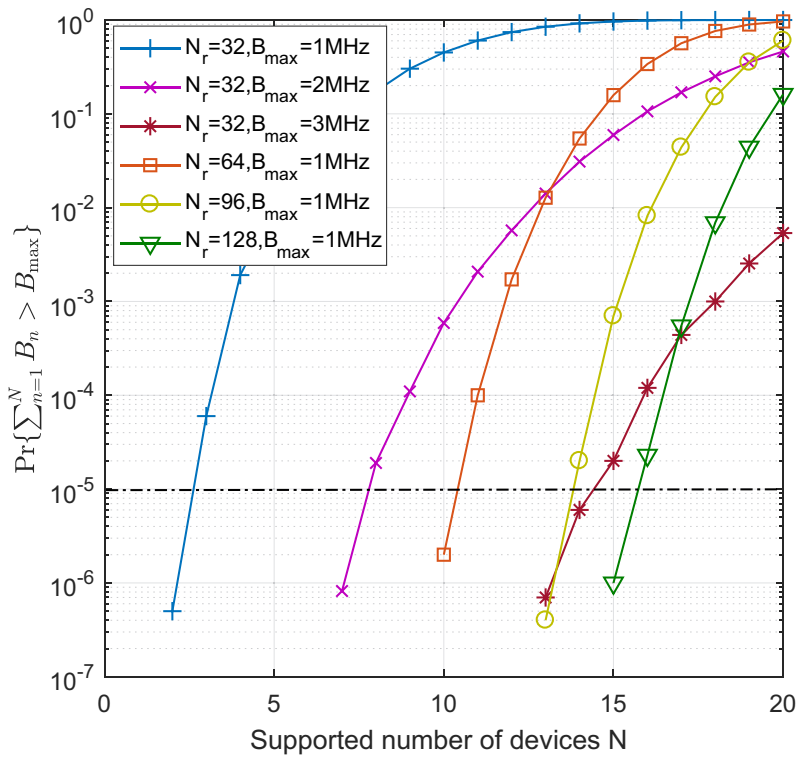}
	\end{minipage}
	\caption{$\Pr\{\sum_{n=1}^N{ B_n} > B_{\max}\}$ v.s. the number of devices when the distribution of {large-scale fading of devices} is known.}
	\label{fig:N_knowLoc}
\end{figure}

Since the {large-scale fading} of devices are random variables in the first scenario, the sum of the required bandwidth is also a random variable. In Fig. \ref{fig:N_knowLoc}, we illustrated the probability that the sum of the required bandwidth is smaller than $B_{\max}$. For URLLC services, we need to guarantee the delay and reliability requirements with high probability, e.g., $99.999$~\%. The results in Fig. \ref{fig:N_knowLoc} show that when $B_{\max} = 1$~MHz and $N_{\rm r} = 32$, the system can only support $2$ devices.  By doubling the number of antennas (or the total bandwidth), $10$ (or $7$) devices can be supported. This implies that increasing the number of antennas at the AP is an efficient way to increase the number of devices that can be supported by the system. {This is because SNR increases with the number of antennas due to array gain. To achieve the same reliability, i.e., packet loss probability, higher order modulation schemes can be used if more antennas are deployed at the AP. Since the spectrum efficiency increases with the order of the modulation scheme, more URLLC devices can be supported with a given amount of bandwidth.}

\begin{figure}[htb]
	\centering
	\begin{minipage}[t]{0.65\textwidth}
		\includegraphics[width=1\textwidth]{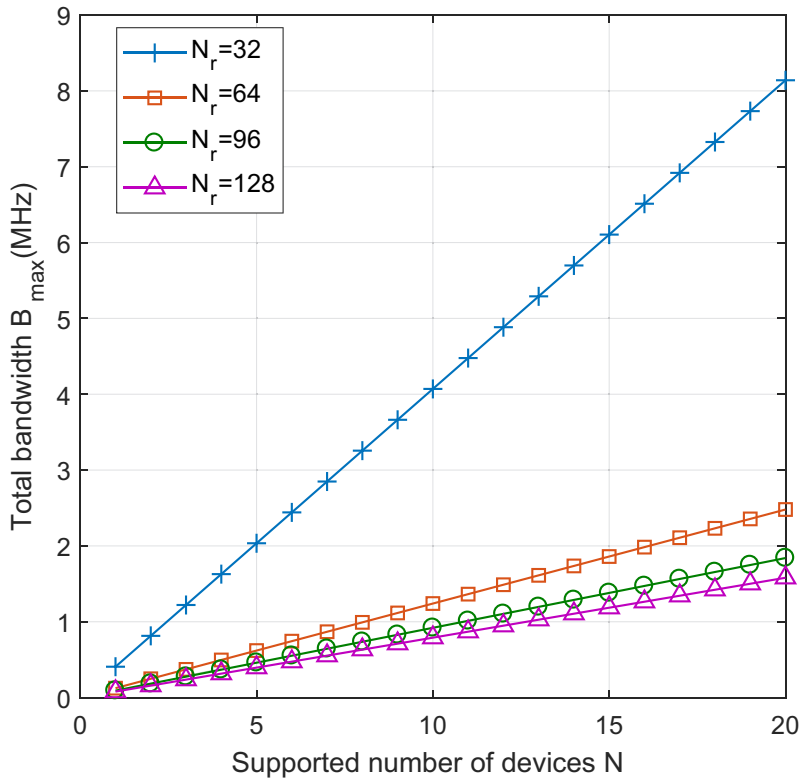}
	\end{minipage}
	\caption{Required total bandwidth v.s. number of devices when the distribution of {large-scale fading of devices} is unknown.}
	\label{fig:N_unknowLoc}
\end{figure}

If the distribution of {large-scale fading} of devices is unknown, the worst case is considered. Then, the total bandwidth that is required to support a given number of devices is deterministic. The results in Fig. \ref{fig:N_unknowLoc} show that the required total bandwidth linearly increases with the number of devices. This is because the required bandwidth for different devices are the same since the worst case is considered for all the devices. In addition, by increasing the number of antennas from $32$ to $64$, we can save $75$~\% of bandwidth. This implies that increasing the number of antennas is an efficient way to improve spectrum efficiency of URLLC.

\subsection{Experiments}

\begin{figure}[!t]
	\centering
	\begin{minipage}[t]{0.65\textwidth}
		\includegraphics[width=1\textwidth]{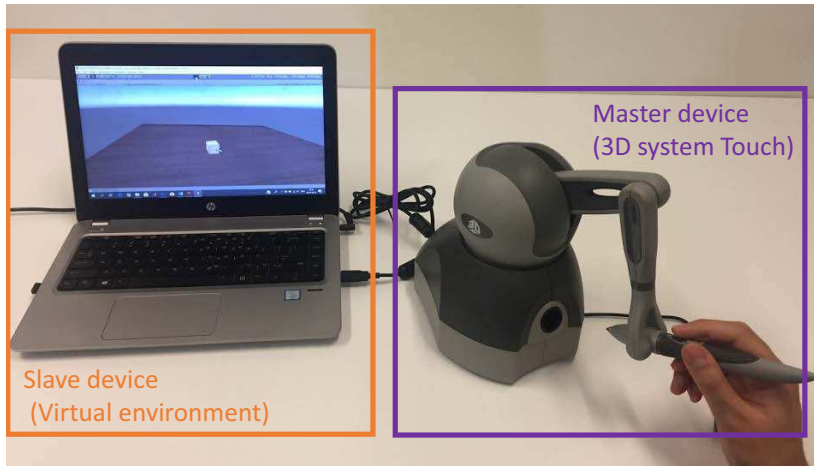}
	\end{minipage}
	\caption{Experiment to obtain real movement data in Tactile Internet.}
	\label{fig:experimemnt_setting}
\end{figure}

To validate whether mobility prediction works for URLLC in practice, we record the real movement data from the experiment shown in Fig. \ref{fig:experimemnt_setting}. In this experiment, a typical application of Tactile Internet is implemented in a virtual environment, where a box of hazardous chemicals or radioactive substances is dragged to move on the floor by a virtual slave device. A tactile hardware device named 3D System Touch (previously named Phantom Omni, or Geomagic) is used as a master device, which sends real time location information to the virtual slave device. A cable is used to connect the master device to the virtual slave device in a virtual environment. The slave device in the virtual environment receives the locations from the master device, so it can move synchronously with the master device.

Human operators are invited to drag the virtual box from one corner to another corner of the floor in the virtual environment. In this experiment, we mainly interested in the motion prediction, so the location information on the x-axis produced by the tactile hardware device is recorded and used to verify the predictions. A general linear prediction method in \eqref{eq:n_step_prediction_model} is used to predict the future state system. Since only location information is available from the hardware, the velocity and acceleration are obtained from the first and the second order differences of locations \cite{mathews2004numerical}. Moreover, due to the limitation of the hardware, the duration of each slot is $T_{\rm s}=1$~ms.

{The prediction error probabilities, $\varepsilon^{\rm p}_n$, with different thresholds, $\delta$, are shown in Table \ref{table:Prediction_error2}, where the prediction horizon, $n T_{\rm s}$, is fixed. The results in Table \ref{table:Prediction_error2} show that for the constant prediction horizon $n T_{\rm s}=5$~ms or $n T_{\rm s}=20$~ms, $\varepsilon^{\rm p}_n$ decreases with the required threshold $\delta$.}

{The relation between the prediction error probability and the prediction horizon is shown in Table \ref{table:Prediction_error1}, where the required threshold is fixed. The results indicate that $\varepsilon^{\rm p}_n$ increases with $n T_{\rm s}$. This observation consists with Lemma \ref{prop:mono_epsilon_p}.}

The results in Tables \ref{table:Prediction_error2} and \ref{table:Prediction_error1} imply that prediction and communication co-design has the potential to achieve zero-latency in practice. It should be noted that the results from the experiment are generally worse than those of the simulations. This is because we only have the location information of the device, and extra estimation errors are introduced during the estimations of the velocity and acceleration.

\begin{table}[!t]
\renewcommand{\arraystretch}{1}
\caption{{Prediction Error Probability with fixed $n T_{\rm s}$}}
\label{table:Prediction_error2}
\centering
{\begin{tabular}{|c|c|c|}
\hline
$\delta$(m)   &  $\varepsilon^{\rm p}_n$ ($n T_{\rm s}=5~ms$) &  $\varepsilon^{\rm p}_n$ ($n T_{\rm s}=20~ms$)  \\
\hline
$0.002$         &   $2.95\times 10^{-4}$ & 0.45  \\
\hline
$0.01$         &   $1.62\times 10^{-5}$  & 0.42 \\
\hline
$0.02$        &    $6.61\times 10^{-6}$  & $3.2\times 10^{-3}$ \\
\hline
$0.1$         &   $2.40\times 10^{-6}$   & $2.40\times 10^{-5}$ \\
\hline
$0.2$        &   $1.80\times 10^{-6}$    & $7.82\times 10^{-6}$ \\
\hline
\end{tabular}}
\end{table}

\begin{table}[!t]
\renewcommand{\arraystretch}{1}
\caption{{Prediction Error Probability with given $\delta$}}
\label{table:Prediction_error1}
\centering
{\begin{tabular}{|c|c||c|c|}
\hline
$n T_{\rm s}$(ms)   &  $\varepsilon^{\rm p}_n$ ($\delta=0.002~m$)    &  $n T_{\rm s}$(ms)   &  $\varepsilon^{\rm p}_n$ ($\delta=0.2~m$)    \\
\hline
$1$         &   $3.00\times 10^{-6}$ & $10$         &   $1.80\times 10^{-6}$       \\
\hline
$2$         &   $1.62\times 10^{-5}$ & $20$         &   $7.82\times 10^{-6}$ \\
\hline
$3$        &    $3.49\times 10^{-5}$ & $30$        &    $2.71\times 10^{-5}$  \\
\hline
$4$         &   $5.77\times 10^{-5}$ & $40$         &   $4.45\times 10^{-5}$\\
\hline
$5$        &   $2.95\times 10^{-4}$  & $50$         &   $1.69\times 10^{-4}$\\
\hline
\end{tabular}}
\end{table}


\section{Conclusions}
In this paper, we studied how to achieve URLLC by prediction and communication co-design. We first derived the decoding error probability, the queueing delay violation probability, and the prediction error probability in closed-form expressions. Then, we established an optimization framework for maximizing the number of devices that can be supported in a system by optimizing time and frequency resources in the communication system and the prediction horizon in the prediction system. Simulation results show that by prediction and communication co-design the tradeoff between delay and reliability can be improved remarkably{, or we can improve the spectrum efficiency subject to the delay and reliability constraints}. In addition, an experiment was carried out to validate the accuracy of prediction in a remote-control system. The results showed that the proposed concept on prediction and communication co-design works well in the practical remote-control system.

\appendices

\section{Proof of {Lemma} \ref{prop:mono_epsilon_p}}
\label{appendix:proposition:monop}
\begin{proof}
To prove this lemma, we need to prove that for any $T^{p,1}_n<T^{p,2}_n$, $\varepsilon^{\rm p}_n(T^{p,1}_n)<\varepsilon^{\rm p}_n(T^{p,2}_n)$ holds. From \eqref{eq:sigma_n_j}, we have
\begin{equation*}\label{eq:diff_sigma}
  \sigma_{j}^2(T^p_n+1)-\sigma_{j}^2(T^p_n)=\sum\limits_{m=1}^{F}\phi_{n,j,m,n}\sigma_{m}^2>0.
\end{equation*}
As such, we can conclude that $\sigma_{j}(T^p_n), j=1,2,\cdots,N$, increases with $T^p_n$.

Moreover, from \eqref{eq:PredictErrCalc}, we can see that $\varepsilon^{\rm p}_n$ increases with $\sigma_{j}(T^p_n), j=1,2,\cdots,N$. Therefore, $\varepsilon^{\rm p}_n$ increases with the prediction horizon $T^p_n$. This completes the proof.
\end{proof}

\section{Derivations of \eqref{eq:varepsilon_q} and \eqref{eq:varepsilon_q2}}
\label{appendix:lambert}

The equation \eqref{eq:EB} can be re-expressed as
\begin{equation}\label{eq:Lambert1}
  \frac{\ln(1/\varepsilon^{\rm q}_n)+\lambda_n D^{\rm q}_n}{\lambda_n D^{\rm q}_n}=\exp\left[ \frac{\ln(1/\varepsilon^{\rm q}_n)+\lambda_n D^{\rm q}_n}{D^{\rm q}_n E^{\rm B}_n} -\frac{\lambda_n}{E^{\rm B}_n} \right],
\end{equation}
and
\begin{equation}\label{eq:Lambert2}
  -\frac{\lambda_n}{E^{\rm B}_n} \exp \left( -\frac{\lambda_n}{E^{\rm B}_n} \right)=  \frac{\ln(1/\varepsilon^{\rm q}_n)+\lambda_n D^{\rm q}_n}{-D^{\rm q}_n E^{\rm B}_n} \exp\left[ \frac{\ln(1/\varepsilon^{\rm q}_n)+\lambda_n D^{\rm q}_n}{-D^{\rm q}_n E^{\rm B}_n} \right].
\end{equation}
According to the definition of Lambert function, \eqref{eq:Lambert2} can be written as
\begin{equation}
  \frac{\ln(1/\varepsilon^{\rm q}_n)+\lambda_n D^{\rm q}_n}{-D^{\rm q}_n E^{\rm B}_n}=\mathbb{W}\left[ -\frac{\lambda_n}{E^{\rm B}_n} \exp \left( -\frac{\lambda_n}{E^{\rm B}_n} \right) \right].
\end{equation}
It should be noted that when $-\frac{\lambda_n}{E^{\rm B}_n} \exp \left( -\frac{\lambda_n}{E^{\rm B}_n} \right)<0$, the Lambert function has two branches according to the range of $\frac{\ln(1/\varepsilon^{\rm q}_n)+\lambda_n D^{\rm q}_n}{D^{\rm q}_n E^{\rm B}_n}$. Specifically, we have
\begin{equation}\label{eq:Lambert3}
  \frac{\ln(1/\varepsilon^{\rm q}_n)+\lambda_n D^{\rm q}_n}{-D^{\rm q}_n E^{\rm B}_n}=\left\{
  \begin{aligned}
  &\mathbb{W}_{0} \left[ -\frac{\lambda_n}{E^{\rm B}_n} \exp \left( -\frac{\lambda_n}{E^{\rm B}_n} \right) \right],\ -1< \frac{\ln(1/\varepsilon^{\rm q}_n)+\lambda_n D^{\rm q}_n}{-D^{\rm q}_n E^{\rm B}_n} <0\\
  &\mathbb{W}_{-1} \left[ -\frac{\lambda_n}{E^{\rm B}_n} \exp \left( -\frac{\lambda_n}{E^{\rm B}_n} \right) \right],\ \frac{\ln(1/\varepsilon^{\rm q}_n)+\lambda_n D^{\rm q}_n}{-D^{\rm q}_n E^{\rm B}_n} \ge 0
  \end{aligned}
  \right.
\end{equation}
In the first case in \eqref{eq:Lambert3}, $\mathbb{W}_{0} \left[ -\frac{\lambda_n}{E^{\rm B}_n} \exp \left( -\frac{\lambda_n}{E^{\rm B}_n} \right) \right] = -\frac{\lambda_n}{E^{\rm B}_n} = \frac{\ln(1/\varepsilon^{\rm q}_n)+\lambda_n D^{\rm q}_n}{-D^{\rm q}_n E^{\rm B}_n}$. We can obtain that $\varepsilon_n^{\rm q}=1$, which does not satisfy the reliability requirement. Thus, the first case in \eqref{eq:Lambert3} can be removed. As such, we have
\begin{equation}\label{Lambert4}
  \frac{\ln(1/\varepsilon^{\rm q}_n)+\lambda_n D^{\rm q}_n}{-D^{\rm q}_n E^{\rm B}_n}=\mathbb{W}_{-1} \left[ -\frac{\lambda_n}{E^{\rm B}_n} \exp \left( -\frac{\lambda_n}{E^{\rm B}_n} \right) \right],
\end{equation}
and
\begin{equation}\label{Lambert5}
  \varepsilon^{\rm q}_n=\exp\left\{ D_n^{\rm q}  E_n^{\rm B} \mathbb{W}_{-1} \left[ -\frac{\lambda_n}{E^{\rm B}_n} \exp \left( -\frac{\lambda_n}{E^{\rm B}_n} \right) \right] + D_n^{\rm q} \lambda_n \right\}.
\end{equation}

\section{Proof of {Lemma} \ref{prop:mono_epsilon_q}}
\label{appendix:proposition:monoq}

\begin{proof}
According to \eqref{eq:varepsilon_q}, we have
\begin{equation}\label{eq:vare_q1}
  \ln{(\varepsilon^{\rm q}_n)}=D^{\rm q}_n\phi(\lambda_n,E^{\rm B}_n).
\end{equation}
Since $\varepsilon^{\rm q}_n$ is in the order of $10^{-5}$ to $10^{-8}$ and $D^{\rm q}_n>0$, $\ln{(\varepsilon^{\rm q}_n)}<0$, and thus $\phi(\lambda_n,E^{\rm B}_n)<0$. As such, $\varepsilon^{\rm q}_n$ decreases with $D^{\rm q}_n$ in \eqref{eq:varepsilon_q} when $\phi(\lambda_n,E^{\rm B}_n)$ is given. The proof follows.

\end{proof}

\section{Proof of Proposition \ref{prop:mono_n}}
\label{appendix:proposition:mono_n}

\begin{proof}
According to \eqref{eq:OptRelationEqual_G}, we have $D^{\rm q}_n+D^{\rm t}_n= D_{\rm max}+T^p_n-D^{\rm r}_n$. To prove this proposition, we need to prove that $\varepsilon^{\rm q}_n$ or $\varepsilon^{\rm t}_n$ decreases with $D^{\rm q}_n+D^{\rm t}_n$.

Next, we will prove $D^{\rm q}_n$ increases with $D^{\rm t}_n$, and thus $D^{\rm q}_n+D^{\rm t}_n$ increases with $D^{\rm t}_n$. According to \eqref{eq:varepsilon_q} and \eqref{eq:varepsilon_q2}, we have
\begin{equation*}
  D^{\rm q}_n=\frac{\ln{(\varepsilon^{\rm q}_n)}}{\phi{(\lambda_n,E^{\rm B}_n)}},
\end{equation*}
where
\begin{equation*}
  \phi(\lambda_n,E^{\rm B}_n)= \frac{\mathbb{W}_{-1}\left(-\lambda_n D^{\rm t}_n e^{-\lambda_n D^{\rm t}_n}\right)}{D^{\rm t}_n} + \lambda_n.
\end{equation*}
To check the monotonicity of $D^{\rm q}_n$ in terms of $\varepsilon^{\rm q}_n$ and $D^{\rm t}_n$, we have the following partial derivatives,
\begin{equation}\label{eq:pd1}
  \frac{\partial{D^{\rm q}_n}}{\partial{\varepsilon^{\rm q}_n}}=\frac{1}{\varepsilon^{\rm q}_n\phi{(\lambda_n,E^{\rm B}_n)}} < 0,
\end{equation}
and
\begin{equation}\label{eq:pd2}
  \frac{\partial{D^{\rm q}_n}}{\partial{D^{\rm t}_n}} = \frac{\ln{(\varepsilon^{\rm q}_n)} \mathbb{W}_{-1}(-\lambda_n D^{\rm t}_n e^{-\lambda_n D^{\rm t}_n})}{\left[ \mathbb{W}_{-1}(-\lambda_n D^{\rm t}_n e^{-\lambda_n D^{\rm t}_n})+1 \right] \left[ \mathbb{W}_{-1}(-\lambda_n D^{\rm t}_n e^{-\lambda_n D^{\rm t}_n})+\lambda_n D^{\rm t}_n \right] } >0.
\end{equation}
As such, we prove $D^{\rm q}_n$ increases with $D^{\rm t}_n$ when $\varepsilon^{\rm q}_n$ is given. According to {Lemma} \ref{prop:mono_epsilon_t}, $\varepsilon^{\rm t}_n$ strictly decreases with the transmission delay $D^{\rm t}_n$. Since $\varepsilon^{\rm q}_n=\varepsilon^{\rm t}_n$, $\varepsilon^{\rm q}_n$ also strictly decreases with the transmission delay $D^{\rm t}_n$. According to \eqref{eq:pd1}, $D^{\rm q}_n$ increases with a smaller $\varepsilon^{\rm q}_n$. So $D^{\rm q}_n$ increases with $D^{\rm t}_n$ when $\varepsilon^{\rm q}_n$ is determined by $D^{\rm t}_n$.

In summary, $\varepsilon^{\rm q}_n$ or $\varepsilon^{\rm t}_n$ decreases with $D^{\rm t}_n$ and $D^{\rm q}_n+D^{\rm t}_n$, and thus decreases with $T^p_n$. This completes the proof.

\end{proof}

\section{Proof of Proposition \ref{prop:error_bound}}
\label{appendix:proposition:bound}
\begin{proof}
In this appendix, we use the notation $\varepsilon^{\rm o}_n(T^p_n,B_n)$, (or $\varepsilon^{\rm t}_n(T^p_n,B_n)$ or $\varepsilon^{\rm p}_n(T^p_n,B_n)$) to represent the relationship between the prediction horizon and the overall reliability (or the decoding error probability or the prediction error probability). For notational simplicity, we first omit $B_n$.

To prove this proposition, we first introduce an upper bound of $\varepsilon^{\rm o}_n(T^p_n) = 2\varepsilon^{\rm t}_n(T^p_n + D_{\max}) + \varepsilon^{\rm p}_n(T^p_n)$, i.e., $\varepsilon_{\rm{o},n}^{\rm ub}(T^p_n) = 2\max\{2\varepsilon^{\rm t}_n(T^p_n + D_{\max}) , \varepsilon^{\rm p}_n(T^p_n)\}$.

Suppose $\tilde{T}^p_n$ is the maximal prediction horizon that satisfies $2\varepsilon^{\rm t}_n(T^p_n + D_{\max}) - \varepsilon^{\rm p}_n(T^p_n) > 0$ for all $0 \le T^p_n \le \tilde{T}^p_n$, and hence $\varepsilon_n^{\rm o,\rm ub}(T^p_n) = 4\varepsilon^{\rm t}_n(T^p_n + D_{\max})$, which strictly decreases with $T^p_n$. On the other hand, when $T^p_n > \tilde{T}^p_n$, $2\varepsilon^{\rm t}_n(T^p_n + D_{\max}) - \varepsilon^{\rm p}_n(T^p_n) < 0$, and hence $\varepsilon_n^{\rm o,\rm ub}(T^p_n) = 2\varepsilon^{\rm p}_n(T^p_n)$, which strictly increases with $T^p_n$. In other words, $\varepsilon_n^{\rm o,\rm ub}(T^p_n)$ strictly decreases with $T^p_n$ when $T^p_n \le \tilde{T}^p_n$ and strictly increases with $T^p_n$ when $T^p_n > \tilde{T}^p_n$. Therefore, the upper bound $\varepsilon_n^{\rm o,\rm ub}(T^p_n)$ is minimized at $\hat{T}^p_n = \tilde{T}^p_n$ or $\hat{T}^p_n = \tilde{T}^p_n+1$.

Let $2\varepsilon^{\rm t}_n(\hat{T}^p_n + D_{\max}) - \varepsilon^{\rm p}_n(\hat{T}^p_n) = \Delta$, where $\Delta$ is the small gap between $2\varepsilon^{\rm t}_n$ and $\varepsilon^{\rm p}_n$ at $\hat{T}^p_n$, which is every closed to zero. We have
\begin{equation}\label{eq:Step1}
\varepsilon^{\rm o}_n(\hat{T}^p_n) \approx \varepsilon_n^{\rm o,\rm ub}(\hat{T}^p_n),
\end{equation}
Besides, $\varepsilon_n^{\rm o,\rm ub}(\hat{T}^p_n)$ is the minimum of $\varepsilon_n^{\rm o,\rm ub}(T^p_n), \forall n\in [0,\infty)$, and hence
\begin{align}
\varepsilon_n^{\rm o,\rm ub}(\hat{T}^p_n) \leq \varepsilon_n^{\rm o,\rm ub}(T_n^{p*}), \label{eq:Step2}
\end{align}
where $T_n^{p*}$ is the optimal prediction horizon that minimizes $\varepsilon^{\rm o}_n(T^p_n)$. According to the definition of $\varepsilon_n^{\rm o,\rm ub}(T^p_n)$, we have
\begin{align}
\varepsilon_n^{\rm o,\rm ub}(T_n^{p*})&= 2 \max\{2\varepsilon^{\rm t}_n(T_n^{p*} + D_{\max}) , \varepsilon^{\rm p}_n(T_n^{p*})\} \nonumber\\
  & < 2 \left[2\varepsilon^{\rm t}_n(T_n^{p*} + D_{\max}) + \varepsilon^{\rm p}_n(T_n^{p*})\right] \nonumber\\
  & = 2 \varepsilon^{\rm o}_n(T_n^{p*})\label{eq:Step3}.
\end{align}
From \eqref{eq:Step1}, \eqref{eq:Step2} and \eqref{eq:Step3}, we have $\varepsilon^{\rm o}_n(\hat{T}^p_n) < 2 \varepsilon^{\rm o}_n(T_n^{p*})$, i.e., $\varepsilon^{\rm o}_n(\hat{T}^p_n) - \varepsilon^{\rm o}_n(T_n^{p*}) < \varepsilon^{\rm o}_n(T_n^{p*})$.

Since $\varepsilon^{\rm o}_n(\hat{T}^p_n)$ and $\varepsilon^{\rm{o}}_n(T_n^{p*})$ are defined as $\hat{\varepsilon}_{n}^{\rm{o},\rm min}(\hat{T}^p_n,B_n)$ and ${\varepsilon}_{\rm{o},n}^{\rm min *}(T_n^{p*},B_n)$, respectively. So we have $\hat{\varepsilon}_{n}^{\rm{o},\rm min}(\hat{T}^p_n,B_n)-{\varepsilon}_{n}^{\rm{o},\rm min *}(T_n^{p*},B_n)<{\varepsilon}_{n}^{\rm{o},\rm min *}(T_n^{p*},B_n)$. The proof follows.
\end{proof}

\ifCLASSOPTIONcaptionsoff
  \newpage
\fi

\bibliographystyle{IEEEtran}
\bibliography{main}

\end{document}